\newcommand{\setR}{\mathbb{R}}
\newcommand{\setN}{\mathbb{N}}
\newcommand{\setZ}{\mathbb{Z}}
\newcommand{\setQ}{\mathbb{Q}}
\newcommand{\F}{\EuScript{F}}
\newcommand{\Lan}{L}
\newcommand{\finiteset}[2]{\{#1,\ldots,#2\}}
\newcommand{\moins}{\backslash}
\newcommand{\fo}[1]{\operatorname{FO}\left(#1\right)}
\newcommand{\C}{\EuScript{C}}
\newcommand{\clo}[1]{\operatorname{cl}(#1)}
\newcommand{\conv}[1]{\operatorname{conv}(#1)}
\newcommand{\cloconvop}{\operatorname{cl}\circ\operatorname{conv}}
\newcommand{\cloconv}[1]{\cloconvop(#1)}
\newcommand{\partie}[1]{\mathcal{P}(#1)}
\newcommand{\ifthenelsepaperversion}[3]{%
  \ifthenelse{%
    \equal{#1}{\paperversion}}%
  {#2}{#3}%
}
\newcommand{\onlyforpaperversion}[2]{%
  \ifthenelsepaperversion{#1}%
  {#2}{}%
}
\lstdefinelanguage{myalgo}{%
  morekeywords={if,then,else,repeat,while,for,to,forall,do,compute,call,return}
}
\begin{document}
\newcommand{\paperversion}{full}

\ifthenelsepaperversion{proc}{}{\pagestyle{plain}}

\ifthenelsepaperversion{proc}{}{\pagestyle{plain}}

\title{Convex Hull of Arithmetic Automata}

\author{Jérôme Leroux}

\institute{
  LaBRI, Université de Bordeaux, CNRS\\
  Domaine Universitaire, 351, cours de la Libération, 33405 Talence, France\\
  \email{leroux@labri.fr}
}

\maketitle

\onlyforpaperversion{draft}{
\begin{center}
  \psshadowbox[framesep=5pt]{
    \bfseries
    Submitted  --- Please Do Not Circulate --- \textit{\today}
  }
\end{center}
} 

\begin{abstract}
Arithmetic automata recognize infinite words of digits denoting decompositions of real and integer vectors. These automata are known expressive and efficient enough to represent the whole set of solutions of complex linear constraints combining both integral and real variables. In this paper, the closed convex hull of arithmetic automata is proved rational polyhedral. Moreover an algorithm computing the linear constraints defining these convex set is provided. Such an algorithm is useful for effectively extracting geometrical properties of the whole set of solutions of complex constraints symbolically represented by arithmetic automata.
\end{abstract}



\section{Introduction}

The \emph{most significant digit first decomposition} provides a natural way to associate finite words of digits to any integer. Naturally, such a decomposition can be extended to real values just by considering \emph{infinite} words rather than \emph{finite} ones. Intuitively, an infinite word denotes the potentially infinite decimal part of a real number. Last but not least, the most significant digit first decomposition can be extended to real vectors just by interleaving the decomposition of each component into a single infinite word.

\medskip

\emph{Arithmetic automata} are Muller automata that recognize infinite words of most significant digit first decompositions of real vectors in a fixed basis of decomposition $r\geq 2$ (for instance $r=2$ and $r=10$ are two classical basis of decomposition). Sets symbolically representable by arithmetic automata in basis $r$ are logically characterized \cite{DBLP:conf/icalp/BoigelotRW98} as the sets definable in the first order theory $\fo{\setR,\setZ,+,\leq,X_r}$ where $X_r$ is an additional predicate depending on the basis of decomposition $r$. 
In practice, arithmetic automata are usually used for the first order additive theory $\fo{\setR,\setZ,+,\leq}$ where $X_r$ is discarded. In fact this theory allows to express complex linear constraints combining both integral and real variables that can be represented by particular Muller automata called \emph{deterministic weak Buchi automata} \cite{DBLP:journals/tocl/BoigelotJW05}.
This subclass of Muller automata has interesting algorithmic properties. In fact, compared to the general class, deterministic weak Buchi automata can be minimized (for the number of states) into a unique canonical form with roughly the same algorithm used for automata recognizing finite words. In particular, these arithmetic automata are well adapted to symbolically represent sets definable in $\fo{\setR,\setZ,+,\leq}$ obtained after many operations (boolean combinations, quantifications). In fact, since the obtained arithmetic automata only depends on the represented set and not on the potentially long sequence of operations used to compute this set, we avoid unduly complicated arithmetic automata. Intuitively, the automaton minimization algorithm performs like a simplification procedure for $\fo{\setR,\setZ,+,\leq}$. In particular arithmetic automata are adapted to the symbolic model checking approach computing inductively reachability sets of systems manipulating counters \cite{DBLP:conf/cav/BardinLP06} and/or clocks \cite{DBLP:conf/cav/BoigelotH06}. In practice algorithms for effectively computing an arithmetic automaton encoding the solutions of formulas in $\fo{\setR,\setZ,+,\leq}$ have been recently successfully implemented in tools \textsc{Lash} and \textsc{Lira} \cite{DBLP:conf/cav/BeckerDEK07}.
Unfortunately, interesting qualitative properties are difficult to extract from arithmetic automata. Actually, operations that can be performed on the arithmetic automata computed by tools \textsc{Lash} and \textsc{Lira} are limited to the universality and the emptiness checking (when the set symbolically represented is not empty these tools can also compute a real vector in this set).

\medskip

Extracting geometrical properties from an arithmetic automaton representing a set $X\subseteq \setR^m$ is a complex problem even if $X$ is definable in $\fo{\setR,\setZ,+,\leq}$. Let us recall related works to this problem. Using a Karr based algorithm \cite{DBLP:journals/acta/Karr76}, the affine hull of $X$ has been proved efficiently computable in polynomial time \cite{DBLP:journals/entcs/Leroux04} (even if this result is limited to the special case $X\subseteq \setN^m$, it can be easily extended to any arithmetic automata). When $X=\setZ^m\cap C$ where $C$ is a rational polyhedral convex set (intuitively when $X$ is equal to the integral solutions of linear constraint systems), it has been proved in \cite{DBLP:conf/lics/Latour04} that we can effectively compute in exponential time a rational polyhedral convex set $C'$ such that $X=\setZ^m\cap C'$. Note that this worst case complexity in theory is not a real problem in practice since the algorithm presented in \cite{DBLP:conf/lics/Latour04} performs well on automata with more than 100~000 states. In \cite{DBLP:conf/wia/Lugiez04} this result was extended to sets $X=F+L$ where $F$ is a finite set of integral vectors and $L$ is a linear set. In \cite{DBLP:journals/ipl/FinkelL05}, closed convex hulls of sets $X\subseteq \setZ^m$ represented by arithmetic automata are proved rational polyhedral and effectively computable in exponential time. Note that compared to \cite{DBLP:conf/lics/Latour04}, it is not clear that this result can be turn into an efficient algorithm. More recently \cite{DBLP:conf/lics/Leroux05}, we provided an algorithm for effectively computing in polynomial time a formula in the Presburger theory $\fo{\setZ,+,\leq}$ when $X\subseteq \setZ^n$ is Presburger-definable. This algorithm has been successfully implemented in \textsc{TaPAS} \cite{LP-08} (The Talence Presburger Arithmetic Suite) and it can be applied on any arithmetic automata encoding a set $X\subseteq\setZ^m$ with more than 100~000 states. Actually, the tool decides if an input arithmetic automaton denotes a Presburger-definable set and in this case it returns a formula denoting this set.

\medskip

In this paper we prove that the closed convex hulls of sets symbolically represented by arithmetic automata are rational polyhedral and effectively computable in exponential time in the worst case. Note that whereas the closed convex hull of a set definable in $\fo{\setR,\setZ,+,\leq}$ can be easily proved rational polyhedral (thanks to quantification eliminations), it is difficult to prove that the closed convex hulls of arithmetic automata are rational polyhedral. We also provide an algorithm for computing this set. Our algorithm is based on the reduction of the closed convex hull computation to data-flow analysis problems. Note that widening operator is usually used in order to speed up the iterative computation of solutions of such a problem. However, the use of widening operators may lead to loss of precision in the analysis. Our algorithm is based on \emph{acceleration} in convex data-flow analysis \cite{DBLP:conf/fsttcs/LerouxS07,DBLP:conf/sas/LerouxS07}. Recall that acceleration consists to compute the exact effect of some control-flow cycles in order to speed up the Kleene fix-point iteration.
 
\medskip

\emph{Outline of the paper} : In section \ref{sec:arithmetic} the most significant digit first decomposition is extended to any real vector and we introduce the arithmetic automata. In section \ref{sec:reduction} we provide the closed convex hull computation reduction to (1) a data-flow analysis problem and (2) the computation of the closed convex hull of arithmetic automata representing only decimal values and having a trivial accepting condition. In section \ref{sec:infinite} we provide an algorithm for computing the closed convex hull of such an arithmetic automaton. Finally in section \ref{sec:computation} we prove that the data-flow analysis problem introduced by the reduction can be solved precisely with an accelerated Kleene fix-point iteration algorithm.
\ifthenelsepaperversion{proc}{
Due to space limitations, most proofs are only sketched in this paper.  A long version of the paper with detailed proofs can be obtained from the author.
}%
{%
Most proofs are only sketched in the paper, but detailed proofs are given in appendix.  This paper is the long version of the SAS 2008 paper.
}

\section{Arithmetic Automata}\label{sec:arithmetic}
This section introduces arithmetic automata (see Fig. \ref{fig:ex}). These automata recognize infinite words of digits denoting \emph{most significant digit first} decompositions of real and integer vectors.

\medskip

As usual, we respectively denote by $\setZ$, $\setQ$ and $\setR$ the sets of integers, rationals and real numbers and we denote by $\setN, \setQ_+, \setR_+$ the restrictions of $\setZ, \setQ, \setR$ to the non-negatives. The \emph{components} of an $m$-dim vector $x$ are denoted by $x[1],\ldots,x[m]$. 

\medskip

We first provide some definitions about regular sets of infinite words. We denote by $\Sigma$ a non-empty finite set called an \emph{alphabet}. 
An \emph{infinite word} $w$ over $\Sigma$ is a function $w\in\setN\rightarrow\Sigma$ defined over $\setN\moins\{0\}$ and a \emph{finite word} $\sigma$ over $\Sigma$ is a function $\sigma\in\setN\rightarrow\Sigma$ defined over a set $\{1,\ldots,k\}$ where $k\in\setN$ is called the \emph{length} of $\sigma$ and denoted by $|\sigma|$. \emph{In this paper, a finite word over $\Sigma$ is denoted by $\sigma$ with some subscript indices and an infinite word over $\Sigma$ is denoted by $w$}. As usual $\Sigma^*$ and $\Sigma^\omega$ respectively denote the set of finite words and the set of infinite words over $\Sigma$. The concatenation of two finite words $\sigma_1,\sigma_2\in\Sigma^*$ and the concatenation of a finite word $\sigma\in\Sigma^*$ with an infinite word $w\in\Sigma^\omega$ are denoted by $\sigma_1\sigma_2$ and $\sigma w$. 
A \emph{graph labelled} by $\Sigma$ is a tuple $G=(Q,\Sigma,T)$ where $Q$ is a non empty finite set of \emph{states} and $T\subseteq Q\times\Sigma\times Q$ is a set of \emph{transitions}. A \emph{finite path} $\pi$ in a graph $G$ is a finite word $\pi=t_1\ldots t_k$ of $k\geq 0$ transitions $t_i\in T$ such that there exists a sequence $q_0,\ldots,q_k\in Q$ and a sequence $a_1,\ldots,a_k\in \Sigma$ such that $t_i=(q_{i-1},a_i,q_i)$ for any $1\leq i\leq k$. The finite word $\sigma=a_1\ldots a_k$ is called the \emph{label} of $\pi$ and such a path $\pi$ is also denoted by $q_0\xrightarrow{\sigma}q_k$ or just $q_0\rightarrow q_k$. We also say that $\pi$ is a path \emph{starting} from $q_0$ and \emph{terminating} in $q_k$. When $q_0=q_k$ and $k\geq 1$, the path $\pi$ is called a cycle on $q_0$. Such a cycle is said \emph{simple} if the states $q_0,\ldots,q_{k-1}$ are distinct. Given an integer $m\geq 1$, a graph $G$ is called an $m$-graph if $m$ divides the length of any cycle in $G$. An \emph{infinite path} $\theta$ is an infinite word of transitions such that any prefixes $\pi_k=\theta(1)\ldots \theta(k)$ is a finite path. The unique infinite word $w\in\Sigma^\omega$ such that $\sigma_k=w(1)\ldots w(k)$ is the label of the finite path $\pi_k$ for any $k\in\setN$ is called the \emph{label} of $\theta$. We say that $\theta$ is \emph{starting} from $q_0$ if $q_0$ is the unique state such that any prefix of $\theta$ is starting from $q_0$. \emph{In the sequel, a finite path is denoted by $\pi$ and an infinite path is denoted by $\theta$}. The \emph{set of infinite paths starting from $q_0$} is naturally denoted with the capital letter $\Theta_G(q_0)$. The set $F$ of states $q\in Q$ such that there exists an infinite number of prefix of $\theta$ terminating in $q$ is called the set of \emph{states visited infinitely often} by $\theta$. Such a path is denoted by $q_0\xrightarrow{w}F$ or just $q_0\rightarrow F$.
A \emph{Muller automaton} $A$ is a tuple $A=(Q,\Sigma,T,Q_0,\F)$ where $(Q,\Sigma,T)$ is a graph, $Q_0\subseteq Q$ is the \emph{initial condition} and $\F\subseteq \partie{Q}$ is the \emph{accepting condition}. The language $\Lan(A)\subseteq\Sigma^\omega$ \emph{recognized} by a Muller automaton $A$ is the set of infinite words $w\in\Sigma^\omega$ such that there exists an infinite path $q_0\xrightarrow{w}F$ with $q_0\in Q_0$ and $F\in\F$.

\begin{figure}[htbp]
  \begin{center}
    \small
    \setlength{\unitlength}{10pt}
    \pssetlength{\psunit}{10pt}
    \pssetlength{\psxunit}{10pt}
    \pssetlength{\psyunit}{10pt}
    \begin{picture}(27,17)(-14,-9)%
      \rput(0,0){\includegraphics{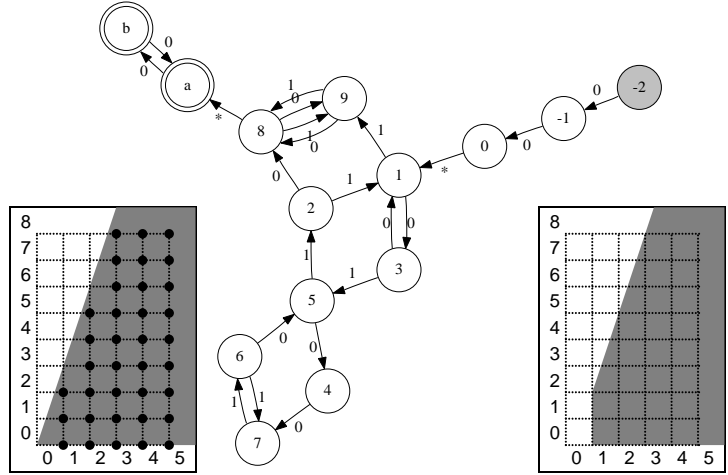}}
      \rput(-13,-4){%
      \put(-1,-1){\framebox(7,10){}}
      \pspolygon*[linecolor=gray](0,0)(3,9)(6,9)(6,0)
      %
      %
      \psgrid[subgriddiv=1,gridcolor=black,griddots=8,gridlabels=7pt](0,0)(5,8)
      \psdots(1,0)(1,1)(1,2)
      \psdots(2,0)(2,1)(2,2)(2,3)(2,4)(2,5)
      \psdots(3,0)(3,1)(3,2)(3,3)(3,4)(3,5)(3,6)(3,7)(3,8)
      \psdots(4,0)(4,1)(4,2)(4,3)(4,4)(4,5)(4,6)(4,7)(4,8)
      \psdots(5,0)(5,1)(5,2)(5,3)(5,4)(5,5)(5,6)(5,7)(5,8)
      %
    }
      \rput(7,-4){%
	\put(-1,-1){\framebox(7,10){}}
	\pspolygon*[linecolor=gray](1,0)(1,2)(3.333,9)(6,9)(6,0)
	\psgrid[subgriddiv=1,gridcolor=black,griddots=8,gridlabels=7pt](0,0)(5,8)
	%
	%
	%
      }
    \end{picture}
  \end{center}
  \caption{%
    On the left, the rational polyhedral convex set $C=\{x\in\setR^2 \mid 3x[1]>x[2] \wedge x[2]\geq 0\}$ in gray and the set $X=\setZ^2\cap C$ of integers depicted by black bullets. On the center, an arithmetic automaton symbolically representing $X$ in basis $2$. On the right, the closed convex hull of $X$ equals to $\cloconv{X}=\{x\in\setR^2 \mid 3x[1]\geq x[2]+1 \wedge x[2]\geq 0\wedge x[1]\geq 1\}$ represented in gray.}
  \label{fig:ex}
\end{figure}


\medskip

Now, we introduce the \emph{most significant digit first decomposition} of real vectors. In the sequel $m\geq 1$ is an integer called \emph{the dimension}, $r\geq 2$ is an integer called the \emph{basis of decomposition}, $\Sigma_r=\finiteset{0}{r-1}$ is called the alphabet of \emph{$r$-digits}, and $S_r=\{0,r-1\}$ is called the alphabet of \emph{sign $r$-digits}. The most significant $r$-digit first decomposition provides a natural way to associate to any real vector $x\in\setR^m$ a tuple $(s,\sigma,w)\in S_r^m\times(\Sigma_r^m)^*\times\Sigma_r^\omega$. Intuitively $(s,\sigma)$ and $w$ are respectively associated to an integer vector $z\in\setZ^m$ and a decimal vector $d\in\left[0,1\right]^m$ satisfying $x=z+d$. Moreover, $s[i]=0$ corresponds to $z[i]\geq 0$ and $s[i]=r-1$ corresponds to $z[i]<0$. More formally, a \emph{most significant $r$-digit first decomposition} of a real vector $x\in\setR^m$ is a tuple $(s,\sigma,w)\in S_r^m\times(\Sigma_r^m)^*\times \Sigma_r^\omega$ such that for any $1\leq i\leq m$, we have:
$$
x[i]
=
r^{\frac{|\sigma|}{m}}\frac{s(i)}{1-r}
+
\sum_{j=1}^{\frac{|\sigma|}{m}} r^{\frac{|\sigma|}{m}-j} \sigma(m(j-1)+i)+\sum_{j=0}^{+\infty}\frac{w(mj+i)}{r^{j+1}}
$$
The previous equality is divided in two parts by introducing the functions $\lambda_{r,m}\in\Sigma_r^\omega\rightarrow\left[-1,0\right]^m$ and $\gamma_{r,m}\in S_r^m\times (\Sigma_r^m)^*\rightarrow\setZ^m$ defined for any $1\leq i\leq m$ by the following equalities. Note the sign in front of the definition of $\lambda_{r,m}$. This sign simplifies the presentation of this paper and it is motivated in the sequel.
$$
\begin{array}{l}
\displaystyle

-\lambda_{r,m}(w)[i]
=
\sum_{j=0}^{+\infty}\frac{w(mj+i)}{r^{j+1}}
\\
\displaystyle
\gamma_{r,m}(s,\sigma)[i]
=
r^{\frac{|\sigma|}{m}}\frac{s(i)}{1-r}
+
\sum_{j=1}^{\frac{|\sigma|}{m}} r^{\frac{|\sigma|}{m}-j} \sigma(m(j-1)+i)
\end{array}
$$

\begin{definition}[\cite{DBLP:conf/icalp/BoigelotRW98}]\label{def:ar}
  An \emph{arithmetic automaton} $A$ in basis $r$ and in dimension $m$ is a Muller automaton over the alphabet $\Sigma_r\cup\{\star\}$ that recognizes a language $\Lan\subseteq S_r^m\star(\Sigma_r^m)^*\star\Sigma_r^\omega$. The following set $X\subseteq \setR^m$ is called the set symbolically represented by $A$:
  $$X=\{\gamma_{r,m}(s,\sigma)-\lambda_{r,m}(w) \mid s\star\sigma\star w\in\Lan\}$$
\end{definition}

\begin{example}
  The arithmetic automaton depicted in Fig.~\ref{fig:ex} symbolically represents $X=\{x\in\setN^2 \mid 3x[1]>x[2]\}$. This automaton has been obtained automatically from the tool \textsc{Lash} through the tool-suite \textsc{TaPAS}\cite{LP-08}.
\end{example}

We observe that \emph{Real Vector Automata (RVA)} and \emph{Number Decision Diagrams (NDD)} \cite{DBLP:conf/icalp/BoigelotRW98} are particular classes of arithmetic automata. In fact, RVA and NDD are arithmetic automata $A$ that symbolically represent sets $X$ included respectively in $\setR^m$ and $\setZ^m$ and such that the accepted languages $\Lan(A)$ satisfy:
\begin{align*}
  \Lan(A)=&\{s\star\sigma\star w \mid \gamma_{r,m}(s,\sigma)-\lambda_{r,m}(w)\in X\} && \text{if $A$ is a RVA}\\
  \Lan(A)=&\{s\star\sigma\star 0^\omega \mid \gamma_{r,m}(s,\sigma)\in X\} && \text{if $A$ is a NDD}
\end{align*}
Since in general a NDD is not a RVA and conversely a RVA is not a NDD,  we consider arithmetic automata in order to solve the closed convex hull computation uniformly for these two classes. Note that simple (even if computationally expensive) automata transformations show that sets symbolically representable by arithmetic automata in basis $r$ are exactly the sets symbolically representable by RVA in basis $r$. In particular \cite{DBLP:conf/icalp/BoigelotRW98}, sets symbolically representable by arithmetic automata in basis $r$ are exactly the sets definable in $\fo{\setR,\setZ,+,\leq,X_r}$ where $X_r\subseteq\setR^3$ is a basis dependant predicate defined in \cite{DBLP:conf/icalp/BoigelotRW98}. This characterization shows that arithmetic automata can symbolically represent sets of solutions of complex linear constraints combining both integral and real values. Recall that the construction of arithmetic automata from formulae in $\fo{\setR,\setZ,+,\leq,X_r}$ is effective and tools \textsc{Lash} and \textsc{Lira} \cite{DBLP:conf/cav/BeckerDEK07} implement efficient algorithms for the restricted logic $\fo{\setR,\setZ,+,\leq}$. The predicate $X_r$ is discarded in these tools in order to obtain arithmetic automata that are deterministic weak Buchi automata \cite{DBLP:journals/tocl/BoigelotJW05}. In fact these automata have interesting algorithmic properties (minimization and deterministic form).

\section{Reduction to Data-Flow Analysis Problems}\label{sec:reduction}
In this section we reduce the computation of the closed convex hull of sets symbolically represented by arithmetic automata to data-flow analysis problems.

\medskip 

We first recall some general notions about complete lattices. Recall that a \emph{complete lattice} is any partially ordered set $(A, \sqsubseteq)$ such that every subset $X \subseteq A$ has a \emph{least upper bound} $\bigsqcup X$ and a \emph{greatest lower bound} $\bigsqcap X$.  The \emph{supremum} $\bigsqcup A$ and the \emph{infimum} $\bigsqcap A$ are respectively denoted by $\top$ and $\bot$.  A function $f \in A \rightarrow A$ is \emph{monotonic}
if $f(x) \sqsubseteq f(y)$ for all $x \sqsubseteq y$ in $A$.
%
For any complete lattice $(A, \sqsubseteq)$ and any set $Q$, we also denote by $\sqsubseteq$ the partial order on $Q \rightarrow A$ defined as the point-wise extension of $\sqsubseteq$, i.e. $f \sqsubseteq g$ iff $f(q) \sqsubseteq g(q)$ for all $q \in Q$.  The partially ordered set $(Q \rightarrow A, \sqsubseteq)$ is also a complete lattice, with lub $\bigsqcup$ and glb $\bigsqcap$ satisfying $(\bigsqcup F)(s) = \bigsqcup \, \{f(s) \mid f \in F\}$ and $(\bigsqcap F)(s) = \bigsqcap \, \{f(s) \mid f \in F\}$ for any subset $F \subseteq Q \rightarrow A$.

\medskip

Now, we recall notions about the complete lattice of closed convex sets.
A function $f\in \setR^n\rightarrow\setR^m$ is said \emph{linear} if there exists a sequence $(M_{i,j})_{i,j}$ of reals indexed by $1\leq i\leq m$ and $1\leq j\leq n$ and a sequence $(v_i)_{i}$ of reals indexed by $1\leq i\leq m$ such that $f(x)[i]=\sum_{j=1}^n M_{i,j}x[j]+v_i$ for any $x\in\setR^n$ and for any $1\leq i\leq m$. When the coefficients $(M_{i,j})_{i,j}$ and $(v_i)_{i}$ are rational, the linear function $f$ is said \emph{rational}. The function $f'\in \setR^m\rightarrow\setR^n$ defined by $f'(x)[i]=\sum_{j=1}^n M_{i,j}x[j]$ for any $x\in\setR^n$ and for any $1\leq i\leq m$ is called the \emph{uniform form} of $f$.
A set $R\subseteq \setR^m$ is said \emph{closed} if the limit of any convergent sequence of vectors in $R$ is in $R$. Recall that any set $X\subseteq\setR^m$ is included in a minimal for the inclusion closed set. This closed set is called the \emph{topological closure} of $X$ and it is denoted by $\clo{X}$. 
Let us recall some notions about convex sets (for more details, see \cite{S-87}). A \emph{convex combination} of $k\geq 1$ vectors $x_1,\ldots,x_k\in\setR^m$ is a vector $x$ such that there exists $r_1,\ldots,r_k\in\setR_+$ satisfying $r_1+\cdots+r_k=1$ and $x=r_1x_1+\cdots+r_kx_k$. A set $C\subseteq \setR^m$ is said \emph{convex} if any convex combination of vectors in $C$ is in $C$. Recall that any $X\subseteq \setR^m$ is included in a minimal for the inclusion convex set. This convex set is called the \emph{convex hull} of $X$ and it is denoted by $\conv{X}$. A convex set $C\subseteq \setR^m$ is said \emph{rational polyhedral} if there exists a rational linear function $f\in\setR^m\rightarrow\setR^n$ such that $C$ is the set of vectors $x\in\setR^m$ such that $\bigwedge_{i=1}^nf(x)[i]\leq 0$. 
Recall that $\clo{\conv{X}}=\conv{\clo{X}}$, $\clo{f(X)}=f(\clo{X})$ and $\conv{f(X)}=f(\conv{X})$ for any $X\subseteq\setR^m$ and for any linear function $f\in\setR^m\rightarrow\setR^n$.
The class of closed convex subsets of $\setR^m$ is written $\C_m$.  We denote by $\sqsubseteq$ the inclusion partial order on $\C_m$.  Observe that $(\C_m, \sqsubseteq)$ is a complete lattice, with lub $\bigsqcup$ and glb $\bigsqcap$ satisfying $\bigsqcup \C = \cloconv{\bigcup \C}$ and $\bigsqcap \C = \bigcap \C$ for any subset $\C \subseteq \C_m$.

\begin{example}\label{ex:fin1}
  Let $X=\setZ^2\cap C$ where $C$ is the convex set $C=\{x\in\setR^2 \mid 3x[1]>x[2]\wedge x[2]\geq 0\}$ (see Fig. \ref{fig:ex}). Observe that $\cloconv{X}=\{x\in\setR^2 \mid 3x[1]\geq x[2]+1\wedge x[2]\geq 0\wedge x[1]\geq 1\}$ is strictly included in $C$. 
\end{example}

\medskip

In the previous section, we introduced two functions $\lambda_{r,m}$ and $\gamma_{r,m}$. Intuitively these functions ``compute'' respectively decimal vectors associated to infinite words and integer vectors associated to finite words equipped with sign vectors. We now introduce two functions $\Lambda_{r,m,\sigma}$ and $\Gamma_{r,m,\sigma}$ that ``\emph{partially compute}'' the same vectors than $\lambda_{r,m}$ and $\gamma_{r,m}$. More formally, let us consider the unique sequences $(\Lambda_{r,m,\sigma})_{\sigma\in\Sigma_r^*}$ and $(\Gamma_{r,m,\sigma})_{\sigma\in\Sigma_r^*}$ of linear functions $\Lambda_{r,m,\sigma},\Gamma_{r,m,\sigma}\in\setR^m\rightarrow\setR^m$ inverse of each other and satisfying $\Lambda_{r,m,\sigma_1\sigma_2}=\Lambda_{r,m,\sigma_1}\circ\Lambda_{r,m,\sigma_2}$, $\Gamma_{r,m,\sigma_1\sigma_2}=\Gamma_{r,m,\sigma_2}\circ\Gamma_{r,m,\sigma_1}$ for any $\sigma_1,\sigma_2\in\Sigma_r^*$, such that $\Lambda_{r,m,\epsilon}$ and $\Gamma_{r,m,\epsilon}$ are the identity function and such that $\Lambda_{r,m,a}$ and $\Gamma_{r,m,a}$ with $a\in\Sigma_r$ satisfy the following equalities where $x\in\setR^m$:
\begin{align*}
  \Lambda_{r,m,a}(x)&=(\frac{x[m]-a}{r},x[1],\ldots,x[m-1])\\
  \Gamma_{r,m,a}(x)&=(x[2],\ldots,x[m],rx[1]+a)
\end{align*}
We first prove the following two equalities (\ref{equ:rho}) and (\ref{equ:eta}) that explain the link between the notations $\lambda_{r,m}$ and $\gamma_{r,m}$ and their capital forms $\Lambda_{r,m,\sigma}$ and $\Gamma_{r,m,\sigma}$. Observe that $\Lambda_{r,m,a}(\lambda_{r,m}(w))=\lambda_{r,m}(a w)$ for any $a\in\Sigma_r$ and for any $w\in\Sigma_r^\omega$. An immediate induction over the length of $\sigma\in\Sigma_r^*$ provides equality~(\ref{equ:rho}). Note also that $\Gamma_{r,m,a_1\ldots a_m}(x)=rx+(a_1,\ldots,a_m)$ for any $a_1,\ldots,a_m\in\Sigma_r$. Thus an immediate induction provides equality~(\ref{equ:eta}).

\begin{align}
  \lambda_{r,m}(\sigma w)&=\Lambda_{r,m,\sigma}(\lambda_{r,m}(w))\quad\quad &\forall \sigma\in\Sigma_r^*\quad\forall w\in\Sigma_r^\omega\label{equ:rho}\\
  \gamma_{r,m}(s,\sigma)&=\Gamma_{r,m,\sigma}(\frac{s}{1-r})\quad\quad &\forall\sigma\in(\Sigma_r^m)^*\quad\forall s\in S_r^m\label{equ:eta}
\end{align}

\medskip

We now reduce the computation of the closed convex hull $C$ of a set $X\subseteq \setR^m$ represented by an arithmetic automaton $A=(Q,\Sigma,T,Q_0,\F)$ in basis $r$ to data-flow analysis problems. We can assume w.l.o.g that $(Q,\Sigma,T)$ is a $m$-graph. As the language recognized by $A$ is included in $S_r^m\star(\Sigma_r^m)^*\star\Sigma_r^\omega$, the set of states can be partitioned into sets depending intuitively on the number of occurrences $|\sigma|_\star$ of the $\star$ symbol in a word $\sigma\in\Sigma^*$. More formally, we consider the set $Q_S$ of states reading \emph{signs}, the set $Q_I$ reading \emph{integers}, and the set $Q_D$ reading \emph{decimals} defined by:
\begin{align*}
  &Q_S=\{q\in Q \mid \exists (q_0,\sigma,F)\in Q_0\times \Sigma^*\times \F\quad|\sigma|_\star=0\quad\wedge\quad q_0\xrightarrow{\sigma}q\rightarrow F\}\\
  &Q_I=\{q\in Q \mid \exists (q_0,\sigma,F)\in Q_0\times \Sigma^*\times \F\quad|\sigma|_\star=1\quad\wedge\quad q_0\xrightarrow{\sigma}q\rightarrow F\}\\
  &Q_D=\{q\in Q \mid \exists (q_0,\sigma,F)\in Q_0\times \Sigma^*\times \F\quad|\sigma|_\star=2\quad\wedge\quad q_0\xrightarrow{\sigma}q\rightarrow F\}
\end{align*}
We also consider the $m$-graphs $G_S$, $G_I$ and $G_D$ obtained by restricting $G$ respectively to the states $Q_S$, $Q_I$ and $Q_D$ and formally defined by:
\begin{align*}
  &G_S=(Q_S,\Sigma_r,T_S) && \text{ with } T_S=T\cap (Q_s\times\Sigma_r\times Q_S)\\
  &G_I=(Q_I,\Sigma_r,T_I) && \text{ with } T_I=T\cap (Q_I\times\Sigma_r\times Q_I)\\
  &G_D=(Q_D,\Sigma_r,T_D) && \text{ with } T_D=T\cap (Q_D\times\Sigma_r\times Q_D)
\end{align*}

\begin{example}\label{ex:qsqi}
  $Q_S=\{-2,-1,0\}$, $Q_I=\{1,\ldots,9\}$ and $Q_D=\{a,b\}$ in Fig.~\ref{fig:ex}. 
\end{example}

The closed convex hull $C=\cloconv{X}$ is obtained from the valuations $C_I\in Q_I\rightarrow \C_m$ and $C_D\in Q_D\rightarrow\C_m$ defined by $C_I=\cloconv{X_I}$ and $C_D=\cloconv{X_D}$ where $X_I$ and $X_D$ are given by:
\begin{align*}
  X_I(q_I)&=\{\Gamma_{r,m,\sigma}(\frac{s}{1-r}) \mid s\in S_r^m\quad \sigma\in \Sigma_r^* \quad \exists q_0\in Q_0\quad q_0\xrightarrow{s\star \sigma}q_I\}\\
  X_D(q_D)&=\{\lambda_{r,m}(w) \mid w\in \Sigma_r^\omega \quad \exists F\in \F\quad q_D\xrightarrow{w}F\}\\
\end{align*}
In fact from the definition of arithmetic automata we get:
$$
C=\bigsqcup_{\stackrel{(q_I,q_D)\in Q_I\times Q_D}{(q_I,\star,q_D)\in T}}C_I(q_I)-C_D(q_D)
$$

\medskip

We now provide data-flow analysis problems whose $C_I$ and $C_D$ are solutions. Observe that $m$-graphs naturally denote control-flow graphs. Before associating semantics to $m$-graph transitions, we first show that $C_I$ and $C_D$ are some fix-point solutions. As $\cloconvop$ and $\Gamma_{r,m,a}$ are commutative, from the inclusion $\Gamma_{r,m,a}(X_I(q_1))\subseteq X_I(q_2)$ we deduce that $C_I$ satisfies the relation $\Gamma_{r,m,a}(C_I(q_1))\sqsubseteq C_I(q_2)$ for any transition $(q_2,a,q_2)\in T_I$. Symmetrically, as $\cloconvop$ and $\Lambda_{r,m,a}$ are commutative, from the inclusion $\Lambda_{r,m,a}(X_D(q_2))\subseteq X_D(q_1)$, we deduce that $\Lambda_{r,m,a}(C_D(q_2))\sqsubseteq C_D(q_1)$ for any transition $(q_1,a,q_2)\in T_D$. Intuitively $C_I$ and $C_D$ are two fix-point solutions of different systems. More formally, we associate two distinct semantics to a transition $t=(q_1,a,q_2)$ of a $m$-graph $G=(Q,\Sigma_r,T)$ by considering the monotonic functions $\Lambda_{G,m,t}$ and $\Gamma_{G,m,t}$ over the complete lattice ${(Q\rightarrow\C_m,\sqsubseteq)}$ defined for any $C\in Q\rightarrow\C_m$ and for any $q\in Q$ by the following equalities:
\begin{align*}
\Lambda_{G,m,t}(C)(q)=&
\begin{cases} 
\Lambda_{r,m,a}(C(q_2)) & \text{ if } q=q_1\\
C(q) & \text{ if } q\not=q_1\\
\end{cases}
\\
\Gamma_{G,m,t}(C)(q)=&
\begin{cases} 
\Gamma_{r,m,a}(C(q_1)) & \text{ if } q=q_2\\
C(q) & \text{ if } q\not=q_2\\
\end{cases}
\end{align*}
Observe that $C_D$ is a fix-point solution of the data-flow problem $\Lambda_{G_D,m,t}(C_D)\sqsubseteq C_D$ for any transition $t\in T_D$ and $C_I$ is a fix-point solution of the data-flow problem $\Gamma_{G_I,m,t}(C_I)\sqsubseteq C_I$ for any transition $t\in T_I$. In the next sections \ref{sec:CD} and \ref{sec:CI} we show that $C_D$ and $C_I$ can be characterized by these two data-flow analysis problems.

\subsection{Reduction for $C_D$}\label{sec:CD}
The computation of $C_D$ is reduced to a data-flow analysis problem for the $m$-graph $G_D$ equipped with the semantics $(\Lambda_{G_D,m,t})_{t\in T_D}$. 

\medskip

Given an infinite path $\theta$ labelled by $w$, we denote by $\lambda_{r,m}(\theta)$ the vector $\lambda_{r,m}(w)$. Given a $m$-graph $G$ labelled by $\Sigma_r$, we denote by $\Lambda_{G,m}$, the valuation $\cloconv{\lambda_{r,m}(\Theta_G)}$ (recall that $\Theta_G(q)$ denotes the set of infinite paths starting from $q$). This notation is motivated by the following Proposition \ref{prop:minimalC}.
\begin{proposition}\label{prop:minimalC}
  The valuation $\Lambda_{G,m}$ is the unique minimal valuation $C\in Q\rightarrow \C_m$ such that $\Lambda_{G,m,t}(C)\sqsubseteq C$ for any transition $t\in T$ and such that $C(q)\not=\emptyset$ for any state $q\in Q$ satisfying $\Theta_G(q)\not=\emptyset$.
\end{proposition}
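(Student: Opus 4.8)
The plan is to show that $\Lambda_{G,m}$ itself satisfies the two displayed conditions and that it lies below every valuation satisfying them; this makes $\Lambda_{G,m}$ the \emph{least} such valuation, hence \emph{a fortiori} the unique minimal one. \emph{Step 1: $\Lambda_{G,m}$ is an admissible post-fixpoint.} Fix a transition $t=(q_1,a,q_2)\in T$. Prepending $t$ to a path $\theta'\in\Theta_G(q_2)$ produces a path in $\Theta_G(q_1)$ whose label is $a$ followed by the label of $\theta'$, so equality~(\ref{equ:rho}) applied with $\sigma=a$ gives $\Lambda_{r,m,a}(\lambda_{r,m}(\Theta_G(q_2)))\subseteq\lambda_{r,m}(\Theta_G(q_1))$, where I write $\lambda_{r,m}(\Theta_G(q))$ for $\{\lambda_{r,m}(\theta)\mid\theta\in\Theta_G(q)\}$. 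Applying $\cloconvop$ to both sides and using that it commutes with the linear map $\Lambda_{r,m,a}$ yields $\Lambda_{r,m,a}(\Lambda_{G,m}(q_2))\sqsubseteq\Lambda_{G,m}(q_1)$, which is exactly $\Lambda_{G,m,t}(\Lambda_{G,m})\sqsubseteq\Lambda_{G,m}$ (at every state other than $q_1$ the two sides coincide). Finally, if $\Theta_G(q)\neq\emptyset$ then $\lambda_{r,m}(\Theta_G(q))$, and hence its closed convex hull $\Lambda_{G,m}(q)$, is non-empty.

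\emph{Step 2: every admissible post-fixpoint dominates $\Lambda_{G,m}$.} Let $C\in Q\to\C_m$ satisfy $\Lambda_{G,m,t}(C)\sqsubseteq C$ for all $t\in T$ and $C(q)\neq\emptyset$ whenever $\Theta_G(q)\neq\emptyset$. Since each $C(q)$ is closed and convex, it suffices to prove $\lambda_{r,m}(\theta)\in C(q)$ for every $q\in Q$ and $\theta\in\Theta_G(q)$: then $\lambda_{r,m}(\Theta_G(q))\subseteq C(q)$ forces $\Lambda_{G,m}(q)=\cloconv{\lambda_{r,m}(\Theta_G(q))}\sqsubseteq C(q)$. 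So fix $\theta=t_1t_2\cdots$ with $t_j=(p_{j-1},a_j,p_j)$, $p_0=q$ and label $w$; for $k\ge 0$ set $\sigma_k=a_1\cdots a_k$ and let $w_k$ denote the suffix $w(k{+}1)w(k{+}2)\cdots$, so that the tail $\theta_k:=t_{k+1}t_{k+2}\cdots$ lies in $\Theta_G(p_k)$ and, by~(\ref{equ:rho}), $\lambda_{r,m}(w)=\Lambda_{r,m,\sigma_k}(\lambda_{r,m}(w_k))$.

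Reading the post-fixpoint condition at each $t_j$ as $\Lambda_{r,m,a_j}(C(p_j))\subseteq C(p_{j-1})$ and chaining these inclusions along $\theta$ gives $\Lambda_{r,m,\sigma_k}(C(p_k))\subseteq C(q)$ for every $k$. By the pigeonhole principle some state $p$ equals $p_k$ for infinitely many $k$; since $\theta_k\in\Theta_G(p)$ for such $k$, the set $C(p)$ is non-empty, so I fix once and for all a single vector $v\in C(p)$ and obtain $\Lambda_{r,m,\sigma_k}(v)\in C(q)$ for infinitely many $k$. It remains to show $\Lambda_{r,m,\sigma_k}(v)\to\lambda_{r,m}(w)$ along $\{k : p_k=p\}$. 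Using $\lambda_{r,m}(w)=\Lambda_{r,m,\sigma_k}(\lambda_{r,m}(w_k))$, write $\Lambda_{r,m,\sigma_k}(v)-\lambda_{r,m}(w)=\Lambda_{r,m,\sigma_k}(v)-\Lambda_{r,m,\sigma_k}(\lambda_{r,m}(w_k))$; this is the image of $v-\lambda_{r,m}(w_k)$ under the uniform form of $\Lambda_{r,m,\sigma_k}$, which depends only on $k$ and is the $k$-fold iterate of the homogeneous map $z\mapsto(z[m]/r,z[1],\ldots,z[m-1])$. A short computation shows the $m$-fold iterate of this map is $z\mapsto z/r$, so its $\ell^\infty$-operator norm is at most $r^{-\lfloor k/m\rfloor}$, which tends to $0$. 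Since $\lambda_{r,m}(w_k)\in[-1,0]^m$, the remaining factor satisfies $\|v-\lambda_{r,m}(w_k)\|_\infty\le\|v\|_\infty+1$ and stays bounded, so the product tends to $0$; as $C(q)$ is closed and contains $\Lambda_{r,m,\sigma_k}(v)$ for infinitely many $k$, we conclude $\lambda_{r,m}(w)\in C(q)$.

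The delicate point is this last convergence step. A naive argument choosing an arbitrary $v_k\in C(p_k)$ for each $k$ fails: the sets $C(p_k)$ may be unbounded, so $\|v_k-\lambda_{r,m}(w_k)\|_\infty$ is not controlled and the error term need not vanish despite the contraction $r^{-\lfloor k/m\rfloor}$. Passing to a state $p$ that recurs infinitely often along $\theta$ and reusing the \emph{same} point $v\in C(p)$ for all relevant $k$ is precisely what keeps that factor bounded while still producing infinitely many points of the closed set $C(q)$ that converge to $\lambda_{r,m}(w)$.
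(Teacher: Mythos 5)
Your proof is correct and follows essentially the same route as the paper's: show $\Lambda_{G,m}$ is an admissible post-fixpoint via commutation of $\cloconvop$ with $\Lambda_{r,m,a}$, then for any admissible $C$ pick a state visited infinitely often by $\theta$, fix a single point of $C$ there, push it back along longer and longer prefixes, and use closedness of $C(q)$. Your explicit operator-norm estimate (and the remark on why one must reuse the \emph{same} vector $v$ rather than varying $v_k\in C(p_k)$) merely makes precise the convergence $\lim_j x_j=\lambda_{r,m}(w)$ that the paper asserts without detail.
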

The following Proposition \ref{prop:ttred1} provides the reduction.
\begin{proposition}\label{prop:ttred1}
  $C_D=\Lambda_{G_D,m}$
\end{proposition}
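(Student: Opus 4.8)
The plan is to establish the two inclusions $C_D\sqsubseteq\Lambda_{G_D,m}$ and $\Lambda_{G_D,m}\sqsubseteq C_D$, the second one through the minimality characterization of Proposition~\ref{prop:minimalC}.

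\textbf{The inclusion $C_D\sqsubseteq\Lambda_{G_D,m}$.} Since $C_D(q)=\cloconv{X_D(q)}$ and $\Lambda_{G_D,m}(q)=\cloconv{\lambda_{r,m}(\Theta_{G_D}(q))}$, it suffices to show $X_D(q)\subseteq\lambda_{r,m}(\Theta_{G_D}(q))$ for every $q\in Q_D$. So fix $q\in Q_D$ and $v=\lambda_{r,m}(w)\in X_D(q)$, witnessed by an infinite path $\theta\colon q\xrightarrow{w}F$ with $F\in\F$ and $w\in\Sigma_r^\omega$. The key point is that $\theta$ is actually an infinite path of the restricted graph $G_D$. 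Indeed, every transition of $\theta$ carries a letter of $w$, hence a digit in $\Sigma_r$; and every state $q'$ visited by $\theta$, reached after some finite prefix $w_1\in\Sigma_r^*$ of $w$, lies in $Q_D$: from $q\in Q_D$ there is a run $q_0\xrightarrow{\sigma_0}q$ with $q_0\in Q_0$ and $|\sigma_0|_\star=2$, so $q_0\xrightarrow{\sigma_0 w_1}q'$ with $|\sigma_0 w_1|_\star=2$ (the digit word $w_1$ contains no $\star$), while the tail of $\theta$ starting at that occurrence of $q'$ is an infinite path $q'\rightarrow F$ with the same set $F$ of infinitely visited states; hence $q'\in Q_D$. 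Therefore every transition of $\theta$ belongs to $T_D=T\cap(Q_D\times\Sigma_r\times Q_D)$, so $\theta\in\Theta_{G_D}(q)$ and $v=\lambda_{r,m}(\theta)\in\lambda_{r,m}(\Theta_{G_D}(q))$, as required.

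\textbf{The inclusion $\Lambda_{G_D,m}\sqsubseteq C_D$.} Here I apply Proposition~\ref{prop:minimalC} to the $m$-graph $G_D$: since $\Lambda_{G_D,m}$ is the minimal valuation satisfying $\Lambda_{G_D,m,t}(C)\sqsubseteq C$ for all $t\in T_D$ and $C(q)\neq\emptyset$ for all $q$ with $\Theta_{G_D}(q)\neq\emptyset$, it suffices to check that $C_D$ satisfies both conditions; combined with $C_D\sqsubseteq\Lambda_{G_D,m}$ already proved, this forces $C_D=\Lambda_{G_D,m}$. The fix-point inequalities were already noted in this section: from $\Lambda_{r,m,a}(X_D(q_2))\subseteq X_D(q_1)$ for $(q_1,a,q_2)\in T_D$ and the commutation of $\cloconvop$ with the linear map $\Lambda_{r,m,a}$ one gets $\Lambda_{G_D,m,t}(C_D)\sqsubseteq C_D$. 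For nonemptiness, I prove the stronger fact that $X_D(q)\neq\emptyset$ (hence $C_D(q)=\cloconv{X_D(q)}\neq\emptyset$) for every $q\in Q_D$: by definition of $Q_D$ there are $q_0\in Q_0$, $F\in\F$ and a word $\sigma_0$ with $|\sigma_0|_\star=2$ such that $q_0\xrightarrow{\sigma_0}q\rightarrow F$; the infinite suffix $q\xrightarrow{w'}F$ witnessing $q\rightarrow F$ has label $w'\in\Sigma_r^\omega$, because the accepted word $\sigma_0 w'$ lies in $S_r^m\star(\Sigma_r^m)^*\star\Sigma_r^\omega$ and $\sigma_0$ already contains the two $\star$'s, so $\lambda_{r,m}(w')\in X_D(q)$.

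\textbf{Main obstacle.} The only genuinely delicate step is the combinatorial claim that an accepting run starting inside $Q_D$ never leaves $G_D$ (neither its states nor its transitions); this is where the $\star$-partition of $Q$ and the shape $S_r^m\star(\Sigma_r^m)^*\star\Sigma_r^\omega$ of the recognized language are used. Everything else is a direct application of Proposition~\ref{prop:minimalC} together with the commutation properties of $\cloconvop$ with linear maps recorded earlier in this section.
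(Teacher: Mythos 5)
Your proof is correct and follows essentially the same route as the paper: the inclusion $\Lambda_{G_D,m}\sqsubseteq C_D$ via the minimality characterization of Proposition~\ref{prop:minimalC} (using the fix-point inequalities and non-emptiness of $C_D$), and the converse via $X_D\subseteq\lambda_{r,m}(\Theta_{G_D})$ followed by $\cloconvop$. You merely spell out two facts the paper asserts without detail, namely that accepting runs from states of $Q_D$ stay inside $G_D$ and that $X_D(q)\neq\emptyset$ for $q\in Q_D$, and both verifications are correct.
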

\begin{proof}
  We have previously proved that $\Lambda_{G_D,m,t}(C_D)\sqsubseteq C_D$ for any transition $t\in T_D$. Moreover, as $C_D(q_D)\not=\emptyset$ for any $q_D\in Q_D$, we deduce the relation $\Lambda_{G_D,m}\sqsubseteq C_D$ by minimality of $\Lambda_{G_D,m}$. For the other relation, just observe that $X_D\subseteq \lambda_{r,m}(\Theta_{G_D})$ and apply $\cloconvop$.
  \qed
\end{proof}

\subsection{Reduction for $C_I$}\label{sec:CI}
The computation of $C_I$ is reduced to data-flow analysis problems for the $m$-graphs $G_S$ and $G_I$ respectively equipped with the semantics $(\Gamma_{G_S,m,t})_{t\in T_S}$ and $(\Gamma_{G_I,m,t})_{t\in T_I}$. 

\medskip

Given a $m$-graph $G=(Q,\Sigma_r,T)$ and an \emph{initial valuation} $C_0\in Q\rightarrow\C_m$, it is well-known from Knaster-Tarski's theorem that there exists a unique minimal valuation $C\in Q\rightarrow\C_m$ such that $C_0\sqsubseteq C$ and $\Gamma_{G,m,t}(C)\sqsubseteq C$ for any $t\in T$. We denote by $\Gamma_{G,m}(C_0)$ this unique valuation.

\medskip

Symmetrically to the definitions of $C_I$ and $C_D$ we also consider the valuation $C_S\in Q_S\rightarrow\C_m$ defined by $C_S=\cloconv{X_S}$ where $X_S$ is given by: 
$$
X_S(q_S)=\{\Gamma_{r,m,s}(0,\ldots,0) \mid s\in S_r^* \quad \exists q_0\in Q_0\quad q_0\xrightarrow{s}q_S\}
$$

The reduction comes from the following Proposition \ref{prop:ttred2} where $C_{S,0}\in Q_S\rightarrow \C_m$ and $C_{I,0}\in Q_I\rightarrow \C_m$ are the following two initial valuations:
\begin{align*}
  C_{S,0}(q_S)&=
  \begin{cases}
    \emptyset & \text{ if } q_S\not\in Q_0\\
    \{(0,\ldots,0)\}     & \text{ if } q_S\in Q_0\\
  \end{cases} 
  \\
  C_{I,0}(q_I)&= \frac{1}{1-r}\bigsqcup_{\stackrel{q_S\in Q_S}{(q_S,\star,q_I)\in T}}C_S(q_S)
\end{align*}
\begin{proposition}\label{prop:ttred2}
  $C_S=\Gamma_{G_S,m}(C_{S,0})$ and $C_I=\Gamma_{G_I,m}(C_{I,0})$.
\end{proposition}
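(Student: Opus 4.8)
The plan is, for each of the two claimed equalities, to check the two properties that characterize the least post-fixpoint $\Gamma_{G,m}(C_0)$ — namely $C_0\sqsubseteq C$ and $\Gamma_{G,m,t}(C)\sqsubseteq C$ for every transition $t$ — which gives one inclusion by minimality, and then to prove the reverse inclusion by a path-pumping induction. Everywhere I use that $\cloconvop$ commutes with linear functions (in particular with each $\Gamma_{r,m,a}$ and with the scalar map $x\mapsto\frac{1}{1-r}x$) together with the composition law $\Gamma_{r,m,\sigma_1\sigma_2}=\Gamma_{r,m,\sigma_2}\circ\Gamma_{r,m,\sigma_1}$, so that most of this is bookkeeping once set up.

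For $C_S=\Gamma_{G_S,m}(C_{S,0})$: one has $C_{S,0}\sqsubseteq C_S$ since for $q_S\in Q_0$ the vector $(0,\dots,0)=\Gamma_{r,m,\epsilon}(0,\dots,0)$ belongs to $X_S(q_S)\subseteq C_S(q_S)$ via the empty path (and $C_{S,0}(q_S)=\emptyset$ otherwise); and for $t=(q_1,a,q_2)\in T_S$ the set inclusion $\Gamma_{r,m,a}(X_S(q_1))\subseteq X_S(q_2)$ holds — if $x=\Gamma_{r,m,s}(0,\dots,0)$ with $q_0\xrightarrow{s}q_1$ then $\Gamma_{r,m,a}(x)=\Gamma_{r,m,sa}(0,\dots,0)$ with $q_0\xrightarrow{sa}q_2$ — so applying $\cloconvop$ yields $\Gamma_{G_S,m,t}(C_S)\sqsubseteq C_S$; by minimality, $\Gamma_{G_S,m}(C_{S,0})\sqsubseteq C_S$. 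For the converse, put $D_S=\Gamma_{G_S,m}(C_{S,0})$ and take $x=\Gamma_{r,m,s}(0,\dots,0)\in X_S(q_S)$ realized by a path $q_0\xrightarrow{a_1}q_1\xrightarrow{a_2}\cdots\xrightarrow{a_k}q_k=q_S$ with $q_0\in Q_0$; then $(0,\dots,0)\in C_{S,0}(q_0)\subseteq D_S(q_0)$, and induction on $i$ using $\Gamma_{G_S,m,(q_{i-1},a_i,q_i)}(D_S)\sqsubseteq D_S$ gives $\Gamma_{r,m,a_1\cdots a_i}(0,\dots,0)\in D_S(q_i)$, hence $x\in D_S(q_S)$. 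Since $D_S(q_S)$ is closed convex, $C_S(q_S)=\cloconv{X_S(q_S)}\subseteq D_S(q_S)$, which gives the first equality.

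The argument for $C_I=\Gamma_{G_I,m}(C_{I,0})$ has the same shape, the only genuinely new point being the $\star$-crossing recorded by $C_{I,0}$. Using $\Gamma_{r,m,s}(0,\dots,0)=\vec s$ for a length-$m$ sign word $s$ (from $\Gamma_{r,m,a_1\cdots a_m}(x)=rx+(a_1,\dots,a_m)$) and routine manipulations with $\cloconvop$, the scalar $\frac{1}{1-r}$ and $\bigsqcup$, one obtains $C_{I,0}(q_I)=\cloconv{\{\frac{\vec s}{1-r}\mid s\in S_r^m,\ \exists q_0\in Q_0,\ q_0\xrightarrow{s\star}q_I\}}$, i.e. $C_{I,0}(q_I)$ is precisely $\cloconvop$ of the $\sigma=\epsilon$ part of $X_I(q_I)$. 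Hence $C_{I,0}\sqsubseteq C_I$ is immediate, and $\Gamma_{G_I,m,t}(C_I)\sqsubseteq C_I$ for $t\in T_I$ was already noted above (same commutation/inclusion argument), so $\Gamma_{G_I,m}(C_{I,0})\sqsubseteq C_I$ by minimality. Conversely, any $x=\Gamma_{r,m,\sigma}(\frac{\vec s}{1-r})\in X_I(q_I)$ comes from a path $q_0\xrightarrow{s}q_S\xrightarrow{\star}q_I'\xrightarrow{\sigma}q_I$ with $q_0\in Q_0$; since $(q_S,\star,q_I')\in T$ and $\Gamma_{r,m,s}(0,\dots,0)\in X_S(q_S)$ we get $\frac{\vec s}{1-r}\in C_{I,0}(q_I')\subseteq\Gamma_{G_I,m}(C_{I,0})(q_I')$, and the same path-pumping induction along $\sigma$ inside $G_I$ places $x$ in $\Gamma_{G_I,m}(C_{I,0})(q_I)$; closedness and convexity then give $C_I(q_I)\subseteq\Gamma_{G_I,m}(C_{I,0})(q_I)$, and the second equality follows.

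The only step asking for care is the identification of $C_{I,0}(q_I)$ with the $\sigma=\epsilon$ slice of $C_I(q_I)$, which is where the normalization factor $\frac{1}{1-r}$, the identity $\Gamma_{r,m,s}(0,\dots,0)=\vec s$, and the shape $S_r^m\star(\Sigma_r^m)^*\star\Sigma_r^\omega$ of the recognized language must fit together; everything else is the two inductions and commutation bookkeeping. Throughout we use, as is standard for arithmetic automata, that every state of $G_S$ and $G_I$ is reachable from $Q_0$ and co-reachable to some accepting set, which also guarantees that the intermediate states of the pumped paths stay in $Q_S$, resp. $Q_I$.
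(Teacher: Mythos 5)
Your proof is correct and follows essentially the same route as the paper's: one inclusion by minimality of the least post-fixpoint after checking that $C_S$ and $C_I$ are post-fixpoints, the other by establishing $X_S\subseteq\Gamma_{G_S,m}(C_{S,0})$ and $X_I\subseteq\Gamma_{G_I,m}(C_{I,0})$ through a path induction and then applying $\cloconvop$. The only difference is that you spell out the inclusions $C_{S,0}\sqsubseteq C_S$ and $C_{I,0}\sqsubseteq C_I$ (in particular the identification of $C_{I,0}(q_I)$ with the $\sigma=\epsilon$ slice of $C_I(q_I)$), which the paper's minimality step uses but leaves implicit.
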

\begin{proof}
  First observe that $X_S\subseteq \Gamma_{G_S,m}(C_{S,0})$ and $X_I\subseteq \Gamma_{G_I,m}(C_{I,0})$. Thus $C_S\sqsubseteq \Gamma_{G_S,m}(C_{S,0})$ and $C_I\sqsubseteq \Gamma_{G_I,m}(C_{I,0})$ by applying $\cloconvop$. Finally, as $\Gamma_{r,m,a}$ and $\cloconvop$ are commutative, we deduce that $\Gamma_{G_S,m,t}(C_S)\sqsubseteq C_S$ for any $t\in T_S$ and $\Gamma_{G_I,m,t}(C_I)\sqsubseteq C_I$ for any $t\in T_I$. The minimality of $\Gamma_{G_S,m}(C_{S,0})$ and $\Gamma_{G_I,m}(C_{I,0})$ provide $\Gamma_{G_S,m}(C_{S,0})\sqsubseteq C_S$ and  $\Gamma_{G_I,m}(C_{I,0})\sqsubseteq C_I$.
  \qed
\end{proof}

\section{Infinite Paths Convex Hulls}\label{sec:infinite}
In this section $G=(Q,\Sigma_r,T)$ is a $m$-graph. We prove that $\Lambda_{G,m}(q)$ is equal to the convex hull of a finite set of rational vectors. Moreover, we provide an algorithm for computing the minimal sets $\Lambda_{G,m}^0(q)\subseteq \setQ^m$ for every $q\in Q$ such that $\Lambda_{G,m}=\conv{\Lambda_{G,m}^0}$ in exponential time in the worst case. 

\medskip

A \emph{fry-pan} $\theta$ in a graph $G$ is an infinite path $\theta=t_1\ldots t_{i}(t_{i+1}\ldots t_k)^\omega$ where $0\leq i<k$ and where $t_1=(q_0\rightarrow q_1),\ldots t_{k}=(q_{k-1}\rightarrow q_k)$ are transitions such that $q_k=q_i$. A fry-pan is said \emph{simple} if $q_0,\ldots,q_{k-1}$ are distinct states. The \emph{finite set of simple fry-pans} starting from $q$ is denoted by $\Theta_G^S(q)$. As expected, we are going to prove that $\Lambda_{G,m}=\conv{\lambda_{r,m}(\Theta_G^S)}$ and $\lambda_{r,m}(\Theta_G^S(q))\subseteq \setQ^m$.

\medskip

We first prove that $\lambda_{r,m}(\theta)$ is rational for any fry-pan $\theta$. Given $\sigma\in\Sigma_r^+$, the following Lemma \ref{lem:car} shows that $\lambda_{r,m}(\sigma^\omega)$ is the unique solution of the rational linear system $\Lambda_{r,m,\sigma}(x)=x$. In particular $\lambda_{r,m}(\sigma^\omega)$ is a rational vector. From equality (\ref{equ:rho}) given in page \pageref{equ:rho}, we deduce that the vector $\lambda_{r,m}(\theta)$ is rational for any fry-pan~$\theta$.
\begin{lemma}\label{lem:car}
  $\lambda_{r,m}(\sigma^\omega)$ is the unique fix-point of $\Lambda_{r,m,\sigma}$ for any $\sigma\in\Sigma_r^+$. 
\end{lemma}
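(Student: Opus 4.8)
The plan is to show both existence and uniqueness of a fixed point of the linear map $\Lambda_{r,m,\sigma}$ on $\setR^m$, and then to identify $\lambda_{r,m}(\sigma^\omega)$ as that fixed point. For existence, I would use equality~(\ref{equ:rho}): applying it with the infinite word $w = \sigma^\omega$ and the prefix $\sigma$ gives $\lambda_{r,m}(\sigma\sigma^\omega) = \Lambda_{r,m,\sigma}(\lambda_{r,m}(\sigma^\omega))$, and since $\sigma\sigma^\omega = \sigma^\omega$ this says exactly $\lambda_{r,m}(\sigma^\omega) = \Lambda_{r,m,\sigma}(\lambda_{r,m}(\sigma^\omega))$, i.e. $\lambda_{r,m}(\sigma^\omega)$ is a fixed point. (One also checks that $\lambda_{r,m}(\sigma^\omega)$ is well-defined as an element of $[-1,0]^m$, which is immediate from the defining series for $\lambda_{r,m}$, since each component is a convergent geometric-type series with digits in $\Sigma_r$.)

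For uniqueness, the key observation is that $\Lambda_{r,m,\sigma}$ is a contraction in a suitable sense. Write $\sigma = a_1 \cdots a_\ell$ with $\ell = |\sigma| \geq 1$. From the single-letter formula $\Lambda_{r,m,a}(x) = \bigl(\tfrac{x[m]-a}{r}, x[1], \ldots, x[m-1]\bigr)$ one sees that its uniform (linear homogeneous) part $\Lambda'_{r,m,a}$ sends $x$ to $\bigl(\tfrac{x[m]}{r}, x[1], \ldots, x[m-1]\bigr)$; composing $\ell$ of these, the uniform part of $\Lambda_{r,m,\sigma}$ is the composite $\Lambda'_{r,m,\sigma} = \Lambda'_{r,m,a_1}\circ\cdots\circ\Lambda'_{r,m,a_\ell}$, whose matrix is a product of matrices each of which shifts coordinates and scales exactly one coordinate by $1/r$. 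Such a product contracts: iterating $\Lambda_{r,m,\sigma}$ $\,m\,$ times multiplies a genuine scaling factor $r^{-1}$ into every coordinate, so $(\Lambda'_{r,m,\sigma})^{m}$ has operator norm at most $r^{-1} < 1$ (with respect to, say, the sup norm, each entry of the iterated matrix is a nonnegative power of $1/r$ with the total exponent along any path at least $1$). Hence the difference of any two fixed points $x,y$ satisfies $x - y = (\Lambda'_{r,m,\sigma})^{m k}(x-y)$ for all $k$, forcing $x = y$.

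Concretely I would phrase uniqueness without explicit norm estimates as follows: if $x$ and $y$ are both fixed points of $\Lambda_{r,m,\sigma}$, then $x - y$ is fixed by the homogeneous linear map $\Lambda'_{r,m,\sigma}$; but unwinding $\Lambda'_{r,m,\sigma}$ from the single-letter definition shows $\Lambda'_{r,m,\sigma}(z)[i] = r^{-\lceil \cdot\rceil}\,z[\,\text{some coordinate}\,]$, and iterating $m$ times yields $\Lambda'_{r,m,\sigma^m}(z) = r^{-1} \cdot (\text{a coordinate permutation of } z)$ when $m \mid \ell$, and more generally $(\Lambda'_{r,m,\sigma})^{m}(z) = r^{-1} M z$ for a $0/1$-matrix-with-at-most-one-one-per-row-scaled $M$ of spectral radius $\leq 1$; thus $\|(\Lambda'_{r,m,\sigma})^{mk}(z)\|_\infty \leq r^{-k}\|z\|_\infty \to 0$, so $z = x - y = 0$.

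The main obstacle is bookkeeping the coordinate shifts when $m$ does not divide $|\sigma|$: the clean statement "iterating $m$ times scales every coordinate by $r^{-1}$" is transparent when $m \mid |\sigma|$ but needs a small index-chasing argument in general (one iterates $m|\sigma|$ times, or argues via the characterization of $\lambda_{r,m}$ componentwise). A cleaner route that avoids this is to prove uniqueness directly from the series definition: if $\Lambda_{r,m,\sigma}(x) = x$, then by equality~(\ref{equ:rho}) again, $x = \Lambda_{r,m,\sigma^k}(x)$ for all $k \geq 1$, and one shows $\Lambda_{r,m,\sigma^k}(x) \to \lambda_{r,m}(\sigma^\omega)$ as $k \to \infty$ for every fixed $x \in \setR^m$, because $\Lambda_{r,m,\sigma^k}(x)[i]$ equals a finite partial sum of the $\lambda_{r,m}(\sigma^\omega)$ series plus a remainder term of order $r^{-\Theta(k)}\|x\|$. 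This limit argument simultaneously reproves existence and gives uniqueness in one stroke, and is probably the presentation I would adopt.
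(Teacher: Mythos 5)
Your proposal is correct and follows essentially the same route as the paper: existence from $\sigma\sigma^\omega=\sigma^\omega$ together with equality~(\ref{equ:rho}), and uniqueness from the fact that the homogeneous (uniform) part of $\Lambda_{r,m,\sigma}$ is a strict contraction after $m$ iterations. The index-chasing worry you raise when $m$ does not divide $|\sigma|$ evaporates in the paper's version: the uniform form of $\Lambda_{r,m,a}$ is the same map $\Lambda_{r,m,0}$ for every letter $a$, so the uniform form of $\Lambda_{r,m,\sigma}^m$ is $\Lambda_{r,m,0}^{m|\sigma|}$, i.e.\ exactly $x\mapsto r^{-|\sigma|}x$ with no residual permutation, and uniqueness then follows from a single application of the resulting affine identity $\Lambda_{r,m,\sigma}^m(x)=\lambda_{r,m}(\sigma^\omega)+r^{-|\sigma|}(x-\lambda_{r,m}(\sigma^\omega))$ (no limit or norm estimate needed) since $r^{-|\sigma|}\neq 1$.
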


\begin{figure}[htbp]
  \begin{center}  
    \begin{tabular}{@{}|c|@{}|c|@{}|c|@{}|c|@{}}
      \hline
      \begin{pspicture}(2.4,0.7)(-0.5,-0.7)
        \rput(-0.2,0){$q'$}
        \rput(2.2,0){$q$}
        \cnode(0,0){2pt}{qp}
        \cnode[linestyle=none](1,0){1pt}{qin}
        \cnode(2,0){2pt}{q}
        \ncarc[arcangle=-90]{->}{q}{qin}
        \ncline{->}{qin}{q}
        \ncline{->}{qp}{qin}
        \ncarc[linestyle=dotted,arcangle=90]{->}{q}{qp}
      \end{pspicture}
      &
      \begin{pspicture}(2.4,0.7)(-0.5,-0.7)
        \rput(-0.2,0){$q'$}
        \rput(2.2,0){$q$}
        \cnode(0,0){2pt}{qp}
        \cnode[linestyle=none](1,0){1pt}{qin}
        \cnode(2,0){2pt}{q}
        \ncarc[linestyle=dotted,arcangle=-90]{->}{q}{qin}
        \ncline[linestyle=dotted]{->}{qin}{q}
        \ncline[linestyle=dotted]{->}{qp}{qin}
        \ncarc[arcangle=90]{->}{q}{qp}
      \end{pspicture}
      &
      \begin{pspicture}(2.4,0.7)(-0.5,-0.7)
        \rput(-0.2,0){$q'$}
        \rput(2.2,0){$q$}
        \cnode(0,0){2pt}{qp}
        \cnode[linestyle=none](1,0){1pt}{qin}
        \cnode(2,0){2pt}{q}
        \ncarc[linestyle=dotted,arcangle=-90]{->}{q}{qin}
        \ncline{->}{qin}{q}
        \ncline{->}{qp}{qin}
        \ncarc[arcangle=90]{->}{q}{qp}
      \end{pspicture}
      &
      \begin{pspicture}(2.4,0.7)(-0.5,-0.7)
        \rput(-0.2,0){$q'$}
        \rput(2.2,0){$q$}
        \cnode(0,0){2pt}{qp}
        \cnode[linestyle=none](1,0){1pt}{qin}
        \cnode(2,0){2pt}{q}
        \ncarc[arcangle=-90]{->}{q}{qin}
        \ncline{->}{qin}{q}
        \ncline[linestyle=dotted]{->}{qp}{qin}
        \ncarc[linestyle=dotted,arcangle=90]{->}{q}{qp}
      \end{pspicture}\\
      \hline
      $\theta'$ & $t$ & $\pi$ & $\theta$\\
      \hline
      \begin{pspicture}(2.4,0.7)(-0.1,-0.7)
        \rput(0,0.3){$q'$}
        \rput(1,0.3){$q$}
        \cnode(0,0){2pt}{qp}
        \cnode[linestyle=none](2,0){1pt}{qin}
        \cnode(1,0){2pt}{q}
        \ncline[arcangle=-90]{->}{q}{qin}
        \ncline{->}{qp}{q}
        \nccircle[angle=-90]{->}{qin}{0.25}
        \ncarc[linestyle=dotted,arcangle=90]{->}{q}{qp}
      \end{pspicture}
      &
      \begin{pspicture}(2.4,0.7)(-0.1,-0.7)
        \rput(0,0.3){$q'$}
        \rput(1,0.3){$q$}
        \cnode(0,0){2pt}{qp}
        \cnode[linestyle=none](2,0){1pt}{qin}
        \cnode(1,0){2pt}{q}
        \ncline[linestyle=dotted,arcangle=-90]{->}{q}{qin}
        \ncline[linestyle=dotted]{->}{qp}{q}
        \nccircle[linestyle=dotted,angle=-90]{->}{qin}{0.25}
        \ncarc[arcangle=90]{->}{q}{qp}
      \end{pspicture}
      &
      \begin{pspicture}(2.4,0.7)(-0.1,-0.7)
        \rput(0,0.3){$q'$}
        \rput(1,0.3){$q$}
        \cnode(0,0){2pt}{qp}
        \cnode[linestyle=none](2,0){1pt}{qin}
        \cnode(1,0){2pt}{q}
        \ncline[linestyle=dotted,arcangle=-90]{->}{q}{qin}
        \ncline{->}{qp}{q}
        \nccircle[linestyle=dotted,angle=-90]{->}{qin}{0.25}
        \ncarc[arcangle=90]{->}{q}{qp}
      \end{pspicture}
      &
      \begin{pspicture}(2.4,0.7)(-0.1,-0.7)
        \rput(0,0.3){$q'$}
        \rput(1,0.3){$q$}
        \cnode(0,0){2pt}{qp}
        \cnode[linestyle=none](2,0){1pt}{qin}
        \cnode(1,0){2pt}{q}
        \ncline[arcangle=-90]{->}{q}{qin}
        \ncline[linestyle=dotted]{->}{qp}{q}
        \nccircle[angle=-90]{->}{qin}{0.25}
        \ncarc[linestyle=dotted,arcangle=90]{->}{q}{qp}
      \end{pspicture}
      \\
      \hline
    \end{tabular}
  \end{center}
  \caption{A graphical support for Proposition \ref{prop:simple} where $\theta'$ denotes a simple fry-pan starting from a state $q'$ and $t=(q,a,q')$ is a transition such that the fry-pan $t\theta'$ is not simple. That means the state $q$ is visited by $\theta'$. Note that $q$ is visited either once or infinitely often. These two situations are depicted respectively on the top line and the bottom line of the tabular.\label{fig:prop}}
\end{figure}
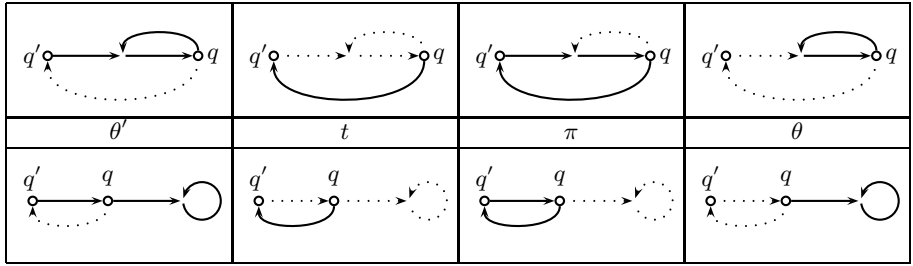

The following Proposition \ref{prop:simple} (see the graphical support given in Fig.~\ref{fig:prop}) is used in the sequel for effectively computing $\Lambda_{G,m}$ thanks to a fix-point iteration algorithm.
\begin{proposition}\label{prop:simple}
  Let $t=(q,a,q')$ be a transition and let $\theta'$ be a simple fry-pan starting from $q'$ such that the fry-pan $t\theta'$ is not simple. In this case there exists a minimal non-empty prefix $\pi$ of $t\theta'$ terminating in $q$. Moreover the fry-pan $\theta$ such that $t\theta'=\pi\theta$ and the fry-pan $\pi^\omega$ are simple and such that $\Lambda_{r,m,a}(\lambda_{r,m}(\theta'))\in\conv{\{\lambda_{r,m}(\theta),\lambda_{r,m}(\pi^\omega)\}}$.
\end{proposition}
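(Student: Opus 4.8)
The plan is to split the claim into a purely combinatorial part about fry-pans (existence of $\pi$, and the fact that $\theta$ and $\pi^\omega$ are simple fry-pans) and an arithmetic part about the linear maps $\Lambda_{r,m,\sigma}$ (the convex combination identity). The arithmetic part will be essentially forced once equality~(\ref{equ:rho}) and Lemma~\ref{lem:car} are in hand; the combinatorial part is where the work lies.

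\textbf{Combinatorial part.} First I would observe that $t\theta'$ is again a fry-pan, obtained from $\theta'$ by prepending $t$ to its handle, and that its set of cycle states is $\{q\}$ together with the set of states occurring on $\theta'$. Since $\theta'$ is simple, all states occurring on $\theta'$ are pairwise distinct, so $t\theta'$ fails to be simple \emph{precisely} when $q$ occurs on $\theta'$; in that case $t\theta'$ has a non-empty prefix terminating in $q$, and I let $\pi$ be a shortest one, so that the required minimal prefix exists. Then $\pi$ is a non-empty cycle on $q$, hence $m$ divides $|\pi|$ because $G$ is a $m$-graph. By minimality of $\pi$, the state $q$ occurs in $\pi$ only as its first and last state, while every intermediate state of $\pi$ occurs on $\theta'$ and is therefore distinct from the others; consequently $\pi$ is a simple cycle and $\pi^\omega$ is a simple fry-pan. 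Writing $t\theta'=\pi\theta$, the suffix $\theta$ starts in $q$ and, being a suffix of the ultimately periodic path $t\theta'$, has the shape required of a fry-pan. To see that $\theta$ is simple I would distinguish, as in Fig.~\ref{fig:prop}, whether $q$ is visited by $\theta'$ finitely often (so $q$ lies on the handle of $\theta'$) or infinitely often (so $q$ lies on the pan of $\theta'$); in both cases the cycle states of $\theta$ form a sub-sequence of the (distinct) cycle states of $\theta'$, the only subtle point being that in the second case the portion of the pan of $\theta'$ located after $q$ has to be reinterpreted as the handle of $\theta$.

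\textbf{Arithmetic part.} Let $v\in\Sigma_r^\omega$ be the label of $\theta'$, let $\sigma_\pi\in\Sigma_r^+$ be the label of $\pi$, and let $w\in\Sigma_r^\omega$ be the label of $\theta$. Since $t$ has label $a$ and $t\theta'=\pi\theta$, we have $av=\sigma_\pi w$, so two applications of equality~(\ref{equ:rho}) give $\Lambda_{r,m,a}(\lambda_{r,m}(\theta'))=\lambda_{r,m}(av)=\lambda_{r,m}(\sigma_\pi w)=\Lambda_{r,m,\sigma_\pi}(\lambda_{r,m}(\theta))$. Now $|\sigma_\pi|=|\pi|=km$ for some $k\geq 1$; grouping $\sigma_\pi$ into $k$ blocks of $m$ digits and composing the corresponding maps $\Gamma_{r,m,a_1\ldots a_m}(x)=rx+(a_1,\ldots,a_m)$ shows that $\Gamma_{r,m,\sigma_\pi}(x)=r^kx+v'$ for some $v'\in\setR^m$, hence its inverse satisfies $\Lambda_{r,m,\sigma_\pi}(x)=r^{-k}x+c$ for some $c\in\setR^m$. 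By Lemma~\ref{lem:car}, $\lambda_{r,m}(\pi^\omega)$ is the unique fixed point of $\Lambda_{r,m,\sigma_\pi}$, so $c=(1-r^{-k})\,\lambda_{r,m}(\pi^\omega)$ and therefore $\Lambda_{r,m,a}(\lambda_{r,m}(\theta'))=r^{-k}\,\lambda_{r,m}(\theta)+(1-r^{-k})\,\lambda_{r,m}(\pi^\omega)$. Since $r\geq 2$ and $k\geq 1$, the coefficient $r^{-k}$ lies in $(0,1)$, which exhibits $\Lambda_{r,m,a}(\lambda_{r,m}(\theta'))$ as a convex combination of $\lambda_{r,m}(\theta)$ and $\lambda_{r,m}(\pi^\omega)$, as desired.

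\textbf{Expected main obstacle.} The arithmetic identity is immediate from equality~(\ref{equ:rho}), Lemma~\ref{lem:car}, and the $m$-graph hypothesis (which forces $m\mid|\pi|$ and hence the contraction factor $r^{-k}$). The genuinely delicate step is the combinatorial case analysis: checking that cutting $t\theta'$ at the first recurrence of $q$ really leaves a \emph{simple} fry-pan, in particular in the case where $q$ lies on the pan of $\theta'$ so that $q$ is visited infinitely often and the leftover arc of the old pan must play the role of the new handle — this is exactly the situation pictured in the bottom row of Fig.~\ref{fig:prop}.
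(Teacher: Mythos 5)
Your proposal is correct and follows essentially the same route as the paper's proof: cut $t\theta'$ at the first recurrence of $q$ to get a simple cycle $\pi$ and a simple suffix fry-pan $\theta$, use the $m$-graph hypothesis to get $m\mid|\pi|$, and exploit the affine form $\Lambda_{r,m,\sigma_\pi}(x)=r^{-|\sigma_\pi|/m}x+(1-r^{-|\sigma_\pi|/m})\lambda_{r,m}(\pi^\omega)$ (the paper states this identity directly, you derive it from Lemma~\ref{lem:car} and the uniform form of $\Gamma_{r,m,\sigma_\pi}$) to exhibit the convex combination. Your treatment is, if anything, more explicit than the paper's on the one delicate point — the case where $q$ lies on the pan of $\theta'$ and the leftover arc must be reparsed as the new handle, which the paper dispatches with a single sentence and the picture in Fig.~\ref{fig:prop}.
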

\begin{proof}
  As $t\theta'$ is not simple whereas $\theta'$ is simple we deduce that there exists a decomposition of $t\theta'$ into $\pi\theta$ where $\pi$ is the minimal non-empty prefix of $t\theta'$ terminating in $q$. Let $\pi$ be the non empty path with the minimal length. Observe that $\pi$ is a simple cycle and thus $\pi^\omega$ is a simple fry-pan. Moreover, as $\theta$ is a suffix of the simple fry-pan $\theta'$, we also deduce that $\theta$ is a simple fry-pan. Observe that $\lambda_{r,m}(t\theta')=\lambda_{r,m}(\pi\theta)$. Moreover, as $\pi$ is a cycle in a $m$-graph we deduce that $m$ divides its length. Denoting by $\sigma$ the label of $\pi$, we deduce that $\sigma\in(\Sigma_r^m)^+$. Now, observe that $\Lambda_{r,m,\sigma}(x)=(1-r^{-\frac{|\sigma|}{m}})\lambda_{r,m}(\sigma^\omega)+r^{-\frac{|\sigma|}{m}}x$ for any $x\in\setR^m$. We deduce that $\Lambda_{r,m,a}(\lambda_{r,m}(\theta'))=(1-r^{-\frac{|\sigma|}{m}})\lambda_{r,m}(\pi^\omega)+r^{-\frac{|\sigma|}{m}}\lambda_{r,m}(\theta)$. Thus $\Lambda_{r,m,a}(\lambda_{r,m}(\theta'))\in\conv{\{\lambda_{r,m}(\theta),\lambda_{r,m}(\pi^\omega)\}}$.
  \qed
\end{proof}

From the previous Proposition \ref{prop:simple} we deduce the following Proposition \ref{prop:cont}.
\begin{proposition}\label{prop:cont}
  We have $\Lambda_{G,m}=\conv{\lambda_{r,m}(\Theta_G^S)}$.
\end{proposition}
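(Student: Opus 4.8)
I want to show $\Lambda_{G,m} = \conv{\lambda_{r,m}(\Theta_G^S)}$, i.e. for each state $q$, the closed convex hull of $\lambda_{r,m}(\Theta_G(q))$ equals the convex hull of the finite set $\lambda_{r,m}(\Theta_G^S(q))$ of simple fry-pans. One inclusion is immediate: $\Theta_G^S(q) \subseteq \Theta_G(q)$, so $\conv{\lambda_{r,m}(\Theta_G^S(q))} \subseteq \cloconv{\lambda_{r,m}(\Theta_G(q))} = \Lambda_{G,m}(q)$. (The convex hull of a finite set of rationals is already closed, by Lemma~\ref{lem:car} and equality~(\ref{equ:rho}), so no closure is lost on the left.) The work is the reverse inclusion.

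\textbf{Key steps for the reverse inclusion.} First I would define the candidate valuation $C(q) := \conv{\lambda_{r,m}(\Theta_G^S(q))}$ and verify it satisfies the hypotheses of Proposition~\ref{prop:minimalC}: that $\Lambda_{G,m,t}(C) \sqsubseteq C$ for every transition $t \in T$, and that $C(q) \neq \emptyset$ whenever $\Theta_G(q) \neq \emptyset$. The non-emptiness is easy: if $\Theta_G(q) \neq \emptyset$ then $q$ lies on some infinite path, hence (pumping) on some simple fry-pan, so $\Theta_G^S(q) \neq \emptyset$. For the fix-point condition, fix $t = (q,a,q') \in T$; I must show $\Lambda_{r,m,a}(C(q')) \sqsubseteq C(q)$, i.e. $\Lambda_{r,m,a}(\lambda_{r,m}(\theta')) \in \conv{\lambda_{r,m}(\Theta_G^S(q))}$ for every simple fry-pan $\theta'$ starting from $q'$. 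Here there are two cases. If $t\theta'$ is already a simple fry-pan starting from $q$, then $\Lambda_{r,m,a}(\lambda_{r,m}(\theta')) = \lambda_{r,m}(t\theta') \in \lambda_{r,m}(\Theta_G^S(q))$ by equality~(\ref{equ:rho}), and we are done. Otherwise $t\theta'$ is not simple, and Proposition~\ref{prop:simple} applies: it gives a decomposition $t\theta' = \pi\theta$ with $\theta$ a simple fry-pan starting from $q$ and $\pi^\omega$ a simple fry-pan (also starting from $q$, since $\pi$ is a simple cycle on $q$), and $\Lambda_{r,m,a}(\lambda_{r,m}(\theta')) \in \conv{\{\lambda_{r,m}(\theta), \lambda_{r,m}(\pi^\omega)\}}$. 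Since both $\lambda_{r,m}(\theta)$ and $\lambda_{r,m}(\pi^\omega)$ lie in $\lambda_{r,m}(\Theta_G^S(q))$, convexity of $C(q)$ closes this case. Thus $\Lambda_{G,m,t}(C) \sqsubseteq C$ for all $t$, and by the minimality asserted in Proposition~\ref{prop:minimalC} we get $\Lambda_{G,m} \sqsubseteq C$, which is exactly the reverse inclusion.

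\textbf{The main obstacle.} The conceptual heavy lifting has already been packaged into Proposition~\ref{prop:simple} (the "fry-pan surgery" reducing a non-simple fry-pan to a convex combination of two simple ones) and into Proposition~\ref{prop:minimalC} (the characterization of $\Lambda_{G,m}$ as the least fix-point with nonempty support). So the residual difficulty is mostly bookkeeping: being careful that in Proposition~\ref{prop:simple} the prefix $\pi$ is genuinely a simple cycle \emph{on $q$} (so that $\pi^\omega \in \Theta_G^S(q)$ and $\theta \in \Theta_G^S(q)$, not merely fry-pans based at some other state), and handling the degenerate sub-case where $\theta'$ itself already passes through $q$ only finitely vs.\ infinitely often, which is exactly the case split depicted in Fig.~\ref{fig:prop}. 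A secondary point worth stating explicitly is why $C(q)$ is already closed — so that verifying $C$ is a fix-point over $\C_m$ (whose elements are closed convex sets) is legitimate; this follows because $\Theta_G^S(q)$ is finite and each $\lambda_{r,m}(\theta')$ is a rational vector by Lemma~\ref{lem:car}, and the convex hull of finitely many points is compact, hence closed.
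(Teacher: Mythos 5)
Your proof is correct and follows essentially the same route as the paper: the easy inclusion from $\Theta_G^S(q)\subseteq\Theta_G(q)$, then defining $C=\conv{\lambda_{r,m}(\Theta_G^S)}$, checking it is closed (finite set of rational vectors), non-empty where required, and a post-fix-point via Proposition~\ref{prop:simple}, and concluding by the minimality in Proposition~\ref{prop:minimalC}. You merely spell out more explicitly the case split (whether $t\theta'$ is simple or not) that the paper leaves implicit.
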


We deduce that there exists a minimal finite set $\Lambda_{G,m}^0(q)\subseteq\setQ^m$ such that $\Lambda_{G,m}=\conv{\Lambda_{G,m}^0}$. Note that an exhaustive computation of the whole set $\Theta_G^S(q)$ provides the set $\Lambda_{G,m}^0(q)$ by removing vectors that are convex combination of others. The efficiency of such an algorithm can be greatly improved by computing inductively subsets $\Theta(q)\subseteq \Theta_G^S(q)$ and get rid of any fry-pan $\theta\in \Theta(q)$ as soon as it becomes a convex combination of other fry-pans in $\Theta(q)\moins\{\theta\}$. The algorithm \textsf{Cycle} is based on this idea.
\begin{corollary}\label{cor:infinite}
  The algorithm \textsf{Cycle}($G$,$m$) terminates by iterating the main while loop at most $|T|^{|Q|}$ times and it returns $\Lambda_{G,m}^0$.
\end{corollary}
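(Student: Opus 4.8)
The plan is to prove the two assertions separately: correctness of the returned value, and the iteration bound. I would first pin down the invariant maintained by the main loop of \textsf{Cycle}. Intuitively the algorithm maintains, for each state $q$, a finite set $\Theta(q)\subseteq\Theta_G^S(q)$ of simple fry-pans, and the key invariant is that $\conv{\lambda_{r,m}(\Theta(q))}$ only grows with each iteration while always remaining contained in $\Lambda_{G,m}(q)=\conv{\lambda_{r,m}(\Theta_G^S(q))}$ (the equality here is Proposition~\ref{prop:cont}). The $\subseteq$-containment is immediate since $\Theta(q)\subseteq\Theta_G^S(q)$ throughout. Monotone growth follows from the update rule: when we process a transition $t=(q,a,q')$, for each $\theta'\in\Theta(q')$ we either add the simple fry-pan $t\theta'$ to $\Theta(q)$, or, if $t\theta'$ is not simple, we invoke Proposition~\ref{prop:simple} to decompose it as $\pi\theta$ and add the two simple fry-pans $\theta$ and $\pi^\omega$ (both of which start from $q$, after accounting for the rotation), noting that $\Lambda_{r,m,a}(\lambda_{r,m}(\theta'))$ lies in the convex hull of their images and hence adds nothing beyond what $\theta$ and $\pi^\omega$ contribute. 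In both cases the new point $\Lambda_{r,m,a}(\lambda_{r,m}(\theta'))$ — which is what a raw Kleene step would produce — is covered, so the process is computing the least fix-point of the data-flow system for $G$ step by step, but always with a finite representation by simple fry-pans.

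For \textbf{termination and the returned value}, the stopping condition is that no transition can enlarge any $\conv{\lambda_{r,m}(\Theta(q))}$. When the loop halts, the valuation $q\mapsto\conv{\lambda_{r,m}(\Theta(q))}$ satisfies $\Lambda_{G,m,t}(C)\sqsubseteq C$ for every $t\in T$ (that is exactly the negation of ``some transition enlarges something''), and it is nonempty on every $q$ with $\Theta_G(q)\neq\emptyset$ because every such $q$ admits at least one simple fry-pan which the algorithm discovers; Proposition~\ref{prop:minimalC} then forces $\Lambda_{G,m}\sqsubseteq C$, while the invariant gives $C\sqsubseteq\Lambda_{G,m}$. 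Hence $C=\Lambda_{G,m}$, and after the final pruning step (discard any $\theta$ whose image is a convex combination of the others) the sets $\Theta(q)$ are exactly the minimal generating sets $\Lambda_{G,m}^0(q)$ by uniqueness of the minimal set with $\conv{\Lambda_{G,m}^0}=\Lambda_{G,m}$. Termination itself needs the iteration bound, which I treat next.

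For the \textbf{iteration bound} $|T|^{|Q|}$: each simple fry-pan starting from $q$ is determined by a simple path (no repeated state) from $q$ together with a designated state on it where the loop closes; such an object is encoded by a word over $T$ of length at most $|Q|$ visiting each state at most once, so the total number of simple fry-pans in the whole graph is at most $|T|^{|Q|}$ (a crude but sufficient count: at most $|T|$ choices at each of at most $|Q|$ steps). Proposition~\ref{prop:simple} guarantees that every fry-pan the algorithm ever creates — either a fresh $t\theta'$ that happens to be simple, or the pieces $\theta$, $\pi^\omega$ extracted from a non-simple one — is again a \emph{simple} fry-pan of $G$, so the pool of fry-pans that can ever enter any $\Theta(q)$ is drawn from this fixed finite set of size $\le|T|^{|Q|}$. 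Since each iteration of the main loop that does not terminate must strictly enlarge some $\conv{\lambda_{r,m}(\Theta(q))}$ — equivalently, must add to the union $\bigcup_q\Theta(q)$ at least one simple fry-pan whose image is not yet in the hull — the number of productive iterations is bounded by the number of such fry-pans, i.e. by $|T|^{|Q|}$, after which the stopping condition is met.

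The \textbf{main obstacle} is the bookkeeping in the inductive step: one must verify carefully that the pieces $\theta$ and $\pi^\omega$ produced by Proposition~\ref{prop:simple} are genuinely simple fry-pans \emph{rooted at the right state} so that they belong to $\Theta_G^S(q)$ (Proposition~\ref{prop:simple} already states simplicity of $\theta$ and $\pi^\omega$, so this is really about the rooting/rotation), and that a Kleene step of the data-flow operator $\Gamma$/$\Lambda$ on the current finite representation never escapes the convex hull generated by simple fry-pans — this is precisely where Proposition~\ref{prop:cont} (hence Proposition~\ref{prop:simple}) does the work. Once that closure property is in hand, both the correctness and the counting argument are routine.
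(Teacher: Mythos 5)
Your proposal is correct and follows essentially the same route as the paper's proof: the invariant $\Theta(q)\subseteq\Theta_G^S(q)$, the bound $\sum_q|\Theta_G^S(q)|\leq|T|^{|Q|}$ from the fact that a simple fry-pan is determined by its first $|Q|$ transitions, strict growth of the hulls (tracked via the set of simple fry-pans already covered) for termination, and Proposition~\ref{prop:minimalC} combined with the containment invariant for $C=\Lambda_{G,m}$ at exit. The only cosmetic difference is that the pruning you describe as a ``final step'' is in fact performed inside every iteration by the inner while loop, but this does not affect the argument.
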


\begin{lstlisting}[emph=Cycle]
Cycle($G=(Q,\Sigma_r,T)$ be a $m$-graph, $m\in\setN\moins\{0\}$)
   for each state $q\in Q$
      if $\Theta_G^S(q)\not=\emptyset$
         let $\theta\in\Theta_G^S(q)$
	 let $\Theta(q)\leftarrow\{\theta\}$
      else
         let $\Theta(q)\leftarrow\emptyset$
   while there exists $t=(q,a,q')\in T$ and $\theta'\in\Theta(q')$
      such that $\Lambda_{r,m,a}(\lambda_{r,m}(\theta'))\not\in \conv{\lambda_{r,m}(\Theta(q))}$
      if $t\theta'$ is simple
         let $\Theta(q)\leftarrow \Theta(q)\cup\{t\theta'\}$
      else
         let $\pi$ be the minimal strict prefix of $t\theta'$ terminating in $q$ 
         let $\theta$ be such that $t\theta'=\pi\theta$
	 let $\Theta(q)\leftarrow\Theta(q)\cup\{\theta,\pi^\omega\}$
      while there exists $\theta_0\in \Theta(q)$ 
         such that $\conv{\lambda_{r,m}(\Theta(q))}=\conv{\lambda_{r,m}(\Theta(q)\moins\{\theta_0\})}$
         let $\Theta(q)\leftarrow \Theta(q)\moins \{\theta_0\}$
   return $\lambda_{r,m}(\Theta)$                                 //$\Lambda_{G,m}^0$
\end{lstlisting}

\section{Fix-point Computation}\label{sec:computation}
In this section we prove that the minimal post-fix-point $\Gamma_{G,m}(C_0)$ is effectively rational polyhedral for any $m$-graph $G=(Q,\Sigma_r,T)$ and for any rational polyhedral initial valuation $C_0\in Q\rightarrow\C_m$. We deduce that the closed convex hull of sets symbolically represented by arithmetic automata are effectively rational polyhedral.

\begin{example}\label{ex:kleene}
  Let $m=1$ and $G=(\{q\},\Sigma_r,\{t\})$ where $t=(q,r-1,q)$ and $C_0(q)=\{0\}$. Observe that the sequence $(C_{i})_{i\in\setN}$ where $C_{i+1}=C_i\sqcup \Gamma_{G,m,t}(C_i)$ satisfies $C_{i}(q)=\{x\in\setR \mid 0\leq x\leq r^i-1\}$.
\end{example}

\medskip

Recall that a Kleene iteration algorithm applied on the computation of $\Gamma_{G,m}(C_0)$ consists in computing the beginning of the sequence $(C_i)_{i\in\setN}$ defined by the induction $C_{i+1}=C_i\bigsqcup_{t\in T}\Gamma_{G,m,t}(C_i)$ until an integer $i$ such that $C_{i+1}=C_i$ is discovered. Then the algorithm terminates and it returns $C_i$. In fact, in this case we have $C_i=\Gamma_{G,m}(C_0)$. However, as proved by the previous Example \ref{ex:kleene} the Kleene iteration does not terminate in general. Nevertheless we are going to compute $\Gamma_{G,m}(C_0)$ by a Kleene iteration such that each $C_i$ is \emph{safely} enlarged into a $C_i'$ satisfying $C_i\sqsubseteq C_i'\sqsubseteq \Gamma_{G,m}(C_0)$. This enlargement follows the acceleration framework introduced in \cite{DBLP:conf/fsttcs/LerouxS07,DBLP:conf/sas/LerouxS07} that roughly consists to compute the precise effect of iterating some cycles. This framework motivate the introduction of the monotonic function $\Gamma^W_{G,m}$ defined over the complete lattice $(Q\rightarrow\C_m,\sqsubseteq)$ for any $C\in Q\rightarrow\C_m$ and for any $q\in Q$ by the following equality:
$$\Gamma_{G,m}^W(C)(q)=\bigsqcup_{q\xrightarrow{\sigma}q}\Gamma_{r,m,\sigma}(C(q))$$

\medskip

The following Proposition \ref{prop:unlargeW} shows that $\Gamma_{G,m}^W(C)$ is effectively computable from $C$ and the function $\Lambda_{G,m}$ introduced in section \ref{sec:reduction}. In this proposition, $G_q$ denotes the graph $G$ reduced to the strongly connected components of $q$.
\begin{proposition}\label{prop:unlargeW}
  For any $C\in Q\rightarrow\C_m$, and for any $q\in Q$, we have:
  $$
  \Gamma_{G,m}^W(C)(q)=C(q)+\setR_+(C(q)-\Lambda_{G_q,m}(q))
  $$
\end{proposition}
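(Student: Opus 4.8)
The plan is to prove the two inclusions separately, unfolding the definition
$\Gamma_{G,m}^W(C)(q)=\bigsqcup_{q\xrightarrow{\sigma}q}\Gamma_{r,m,\sigma}(C(q))$
and exploiting the affine structure of the maps $\Gamma_{r,m,\sigma}$ on cycles.
The starting observation, already used in the proof of Proposition \ref{prop:simple},
is that for a cycle label $\sigma\in(\Sigma_r^m)^+$ one has the affine identity
$\Gamma_{r,m,\sigma}(x)=r^{|\sigma|/m}x + \gamma_{r,m,\sigma}$ for a constant vector,
and dually $\Lambda_{r,m,\sigma}(x)=(1-r^{-|\sigma|/m})\lambda_{r,m}(\sigma^\omega)+r^{-|\sigma|/m}x$.
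Since $\Gamma_{r,m,\sigma}$ and $\Lambda_{r,m,\sigma}$ are inverse of each other, substituting
$x=\Lambda_{r,m,\sigma}(y)$ and solving gives, for every $y$,
\[
\Gamma_{r,m,\sigma}(y)=y+(r^{|\sigma|/m}-1)\bigl(y-\lambda_{r,m}(\sigma^\omega)\bigr).
\]
This is the key computation: it expresses $\Gamma_{r,m,\sigma}(y)$ as $y$ plus a non‑negative
multiple of $y-\lambda_{r,m}(\sigma^\omega)$, the multiplier $r^{|\sigma|/m}-1$ ranging over
$\{r^k-1 : k\geq 1\}$ as $\sigma$ ranges over cycle labels on $q$.

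For the inclusion $\Gamma_{G,m}^W(C)(q)\sqsubseteq C(q)+\setR_+(C(q)-\Lambda_{G_q,m}(q))$:
fix a cycle $q\xrightarrow{\sigma}q$ and a point $y\in C(q)$. Only cycles inside the
strongly connected component of $q$ contribute, so $\sigma$ is a cycle label in $G_q$,
hence $\sigma^\omega\in\Theta_{G_q}(q)$ and $\lambda_{r,m}(\sigma^\omega)\in\Lambda_{G_q,m}(q)$
by definition of $\Lambda_{G_q,m}=\cloconv{\lambda_{r,m}(\Theta_{G_q})}$. By the displayed
identity, $\Gamma_{r,m,\sigma}(y)=y+(r^{|\sigma|/m}-1)(y-\lambda_{r,m}(\sigma^\omega))$ lies in
$C(q)+\setR_+(C(q)-\Lambda_{G_q,m}(q))$. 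Since this set is convex and closed (being the
Minkowski sum of a closed convex set with a closed convex cone generated by a compact‑minus‑compact
set — here one should check closedness using that $\Lambda_{G_q,m}(q)$ is, by
Proposition \ref{prop:cont}, the convex hull of finitely many rational vectors, so the cone
is finitely generated), taking the least upper bound $\bigsqcup$, i.e.\ closed convex hull of the
union, over all cycles stays inside it.

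For the reverse inclusion, take $y,y'\in C(q)$, $v\in\Lambda_{G_q,m}(q)$ and $t\in\setR_+$;
we must show $y+t(y'-v)\in\Gamma_{G,m}^W(C)(q)$. First reduce $v$ to a vertex:
by Proposition \ref{prop:cont}, $v$ is a convex combination of vectors $\lambda_{r,m}(\theta)$
for simple fry‑pans $\theta\in\Theta_{G_q}^S(q)$, and since $y+t(y'-\cdot)$ is affine in the last slot and
$\Gamma_{G,m}^W(C)(q)$ is convex, it suffices to treat $v=\lambda_{r,m}(\theta)$ for a simple
fry‑pan $\theta=\pi\mu^\omega$ with $\mu$ a simple cycle on a state of the component and $\pi$ a path from $q$;
and in fact, using again equation (\ref{equ:rho}) together with the affine form of $\Lambda$ on the
tail cycle, one sees $\lambda_{r,m}(\theta)$ can be pushed to the case where $\theta=\mu^\omega$ is a
simple cycle through $q$ itself (absorbing $\pi$ by conjugation, since in a strongly connected
component one can rotate a cycle to pass through $q$). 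So assume $v=\lambda_{r,m}(\mu^\omega)$ for a
cycle $q\xrightarrow{\mu}q$ in $G_q$. Now iterate $\mu$: the label $\mu^n$ is again a cycle on $q$,
with $\Gamma_{r,m,\mu^n}(y')=y'+(r^{n|\mu|/m}-1)(y'-v)$. As $n\to\infty$ the multiplier
$r^{n|\mu|/m}-1$ tends to $+\infty$, so the points $\Gamma_{r,m,\mu^n}(y')$ trace a ray from $y'$
in direction $y'-v$; since $\Gamma_{G,m}^W(C)(q)$ contains all of them and is closed and convex,
it contains the whole ray $\{y'+s(y'-v):s\geq 0\}$, and it contains $y'=\Gamma_{r,m,\epsilon}$ is not
needed — rather it contains $y\in C(q)$ trivially only if $\epsilon$ counted as a cycle, which it does
not, so instead argue: $\Gamma_{G,m}^W(C)(q)$ contains $\Gamma_{r,m,\mu}(C(q))$ and, combined with
the ray direction just obtained and convexity, contains $C(q)+\setR_+(C(q)-v)$; taking the point
$y+t(y'-v)$ finishes this case. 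Finally, glue the cases: a general element $y+t(y'-v)$ of
$C(q)+\setR_+(C(q)-\Lambda_{G_q,m}(q))$ is handled by the affine/convexity reductions above, giving
the claimed equality.

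The main obstacle I expect is the reverse inclusion — specifically two points of rigor:
(i) justifying that one may assume the direction‑generating cycle passes through $q$ (the
"rotate a cycle inside a strongly connected component" step, which needs a clean argument that
$\lambda_{r,m}$ of a rotated/shifted fry‑pan stays in $\Lambda_{G_q,m}(q)$, via equation
(\ref{equ:rho}) and the affine form of $\Lambda_{r,m,\sigma}$); and (ii) the topological
bookkeeping — showing that the right‑hand side $C(q)+\setR_+(C(q)-\Lambda_{G_q,m}(q))$ is closed
and convex so that passing to $\bigsqcup=\cloconvop\circ\bigcup$ on the left does not overshoot,
which relies on $\Lambda_{G_q,m}(q)$ being a finitely generated (hence compact) rational polytope
from Proposition \ref{prop:cont}. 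The forward inclusion and the affine identity itself are routine
once the identity $\Gamma_{r,m,\sigma}(y)=y+(r^{|\sigma|/m}-1)(y-\lambda_{r,m}(\sigma^\omega))$ is in hand.
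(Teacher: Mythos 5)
Your overall strategy is the paper's: the same affine identity $\Gamma_{r,m,\sigma}(y)=y+(r^{|\sigma|/m}-1)(y-\lambda_{r,m}(\sigma^\omega))$ (the paper isolates it as a lemma, stated for all powers $\sigma^i$, $i\in\setN$), the same unfolding of $\Gamma_{G,m}^W$, and the same two-inclusion structure. The forward inclusion is essentially correct. The reverse inclusion, however, has two genuine holes, both of which you half-notice but do not actually repair. First, reducing $v\in\Lambda_{G_q,m}(q)$ to $v=\lambda_{r,m}(\mu^\omega)$ for a cycle $\mu$ \emph{on $q$} is not a ``rotation/conjugation'': if the tail cycle of a fry-pan does not pass through $q$, no exact rewriting produces a cycle on $q$ with the same $\lambda$-value. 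What is true --- and what the paper proves as a separate lemma --- is that for strongly connected $G_q$ one has $\Lambda_{G_q,m}(q)=\cloconv{\{\lambda_{r,m}(\sigma^\omega)\mid q\xrightarrow{\sigma}q\}}$, obtained by appending a fixed return path to longer and longer prefixes of an infinite path and passing to the limit; the limit is then absorbed because $\Gamma_{G,m}^W(C)(q)$ is closed and $v\mapsto y+t(y'-v)$ is continuous. You need this approximation argument, not a rotation.

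Second, and more seriously, your treatment of small multipliers collapses. You correctly observe that nonempty cycles only yield multipliers $r^{|\sigma|/m}-1\geq r-1>0$, so the points $\Gamma_{r,m,\mu^n}(y')$ populate only the portion of the ray beyond $y'+(r^{|\mu|/m}-1)(y'-v)$; convexity then gives nothing closer to $y'$, and in particular does not give $y\in C(q)$ itself (the case $t=0$) nor $y+t(y'-v)$ for $0\leq t<r-1$. Your ``instead argue \ldots by convexity'' does not close this: the convex hull of that ray together with $\Gamma_{r,m,\mu}(C(q))$ still misses $C(q)$. The resolution lies in the definitions: in this paper a finite path has length $k\geq 0$, so $q\xrightarrow{\epsilon}q$ is among the paths indexing $\Gamma_{G,m}^W(C)(q)=\bigsqcup_{q\xrightarrow{\sigma}q}\Gamma_{r,m,\sigma}(C(q))$, whence $C(q)\sqsubseteq\Gamma_{G,m}^W(C)(q)$; equivalently, the paper's lemma takes $i\in\setN$ \emph{including} $i=0$, so the multiplier set $\{r^{ik}-1\mid i\in\setN\}$ has closed convex hull all of $\setR_+$. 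Even then, to reach $y+t(y'-v)$ with two distinct base points you still need the small computation $c=\frac{1}{1+t}(y+ty')$, which gives $y+t(y'-v)=c+t(c-v)$ with $c\in C(q)$ --- this is the paper's final step and is absent from your sketch. (A minor further caveat: your closedness argument for the right-hand side leans only on $\Lambda_{G_q,m}(q)$ being a polytope, but $C(q)$ is an arbitrary closed convex set, so $\setR_+(C(q)-\Lambda_{G_q,m}(q))$ need not be finitely generated; this wrinkle is shared with the statement itself and is harmless for the polyhedral valuations arising in the algorithm.)
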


We now prove that the enlargement is sufficient to enforce the convergence of a Kleene iteration.
\begin{proposition}\label{prop:unlarge}
  Let $C_0\sqsubseteq C_0'\sqsubseteq C_1\sqsubseteq C_1'\sqsubseteq \ldots$ be the sequence defined by the induction $C_{i+1}=C_i'\bigsqcup_{t\in T}\Gamma_{G,m,t}(C_i')$ and $C'_i=\Gamma_{G,m}^W(C_i)$. There exists $i< |Q|$ satisfying $C_{i+1}=C_i$. Moreover, for such an integer $i$ we have $C_i=\Gamma_{G,m}(C_0)$.
\end{proposition}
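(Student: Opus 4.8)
The plan is to show two things: first, that the sequence $C_0 \sqsubseteq C_0' \sqsubseteq C_1 \sqsubseteq \ldots$ stabilizes after at most $|Q|$ steps, and second, that whatever fixed point it stabilizes to is precisely the minimal post-fix-point $\Gamma_{G,m}(C_0)$. The second part is the easier one and I would dispatch it first. On the one hand, since $C_i' = \Gamma_{G,m}^W(C_i)$ only iterates cycles $q \xrightarrow{\sigma} q$ whose effect $\Gamma_{r,m,\sigma}$ is a composition of the transition semantics, an easy induction shows $C_i' \sqsubseteq \Gamma_{G,m}(C_0)$ whenever $C_i \sqsubseteq \Gamma_{G,m}(C_0)$; combined with the fact that $\bigsqcup_{t\in T}\Gamma_{G,m,t}$ preserves the property of being below a post-fix-point, we get $C_i \sqsubseteq \Gamma_{G,m}(C_0)$ for all $i$ by induction, using $C_0 \sqsubseteq \Gamma_{G,m}(C_0)$ as the base case. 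On the other hand, if $C_{i+1} = C_i$, then in particular $C_i = C_i' \bigsqcup_{t\in T}\Gamma_{G,m,t}(C_i') \sqsupseteq \Gamma_{G,m,t}(C_i)$ for every $t$ (using $C_i \sqsubseteq C_i'$ and monotonicity, or directly $C_i' = C_i$ which follows from $C_i \sqsubseteq C_i' \sqsubseteq C_{i+1} = C_i$), so $C_i$ is itself a post-fix-point above $C_0$, whence $\Gamma_{G,m}(C_0) \sqsubseteq C_i$ by minimality. Together with the previous inclusion this gives $C_i = \Gamma_{G,m}(C_0)$.

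The substantive part is the termination bound $i < |Q|$. Here the idea is that the Widened iterate $\Gamma_{G,m}^W$, via Proposition \ref{prop:unlargeW}, already "saturates" each strongly connected component in one shot: once a state $q$ has a nonempty value $C_i(q)$, the value $C_i'(q) = C_i(q) + \setR_+(C_i(q) - \Lambda_{G_q,m}(q))$ is stable under all cycles through $q$, i.e. $\Gamma_{G,m}^W$ is idempotent on it. So the only way the sequence can keep growing is by \emph{propagating nonemptiness / new directions along transitions between distinct SCCs}, and since the SCC quotient of $G$ is a DAG of height at most $|Q|$ (in fact at most the number of SCCs), after $|Q|$ rounds every reachable state has received everything it can. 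I would make this precise by an argument on the condensation DAG: define the "level" of a state as its depth in the SCC-DAG, prove by induction on the level $\ell$ that $C_\ell(q)$ has reached its final value for every state $q$ of level $\le \ell$, using (i) idempotence of $\Gamma_{G,m}^W$ within an SCC to handle the intra-component part and (ii) the fact that one application of $\bigsqcup_{t\in T}\Gamma_{G,m,t}$ transfers the finalized value across one DAG-edge. Since all levels are $< |Q|$, stabilization occurs at some $i < |Q|$.

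The main obstacle I anticipate is the intra-SCC idempotence claim and its interaction with the $\bigsqcup_{t\in T}\Gamma_{G,m,t}$ step: one must check that when a value enters an SCC through several entry points and/or is enlarged by an incoming transition, a \emph{single} subsequent application of $\Gamma_{G,m}^W$ (not several) already stabilizes it — this is exactly what Proposition \ref{prop:unlargeW} buys us, since the formula $C(q) + \setR_+(C(q) - \Lambda_{G_q,m}(q))$ is manifestly a closed convex set that absorbs one more application of itself, but one has to verify that $\Gamma_{G,m}^W$ applied to a value that is already of this saturated form, possibly after a $\bigsqcup_{t}\Gamma_{G,m,t}$ bump that stays inside the SCC, does not produce anything new. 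A careful but routine case analysis on whether a transition $t$ stays within an SCC or crosses to a deeper one, together with the cone identity from Proposition \ref{prop:unlargeW}, should close this gap; the bookkeeping of the induction on DAG-levels is then straightforward.
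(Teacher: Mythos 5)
Your second half (once $C_{i+1}=C_i$ holds, the limit equals $\Gamma_{G,m}(C_0)$) is correct and matches the paper: $C_i\sqsubseteq C_i'\sqsubseteq\Gamma_{G,m}(C_0)$ for all $i$ by induction, and a stationary $C_i$ is a post-fix-point above $C_0$, so minimality gives the reverse inclusion.

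The termination argument, however, has a genuine gap. The induction you propose --- ``$C_\ell(q)$ has reached its final value for every state $q$ of level $\le\ell$ in the SCC-DAG'' --- is false, because $\Gamma_{G,m}^W$ does not saturate a strongly connected component in one shot: by definition it only closes each state $q$ under the cycles $q\xrightarrow{\sigma}q$ \emph{through that same state}; it never transports a value from $q$ to a \emph{different} state $q'$ of the same component. Transport between distinct states happens only in the $\bigsqcup_{t\in T}\Gamma_{G,m,t}$ step, one transition at a time. Concretely, if an SCC is a simple cycle $q_1\to q_2\to\cdots\to q_k\to q_1$ and $C_0$ is non-empty only at $q_1$, then $C_i(q_j)=\emptyset$ for all $j>i+1$, so a single level-$0$ component of size $k$ already needs about $k$ rounds, not one; the run of Example~\ref{ex:lastb} exhibits exactly this behaviour, with states becoming non-empty one transition-step at a time over five rounds. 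Hence your per-level accounting cannot yield $i<|Q|$; the budget must be charged per \emph{state}, not per SCC. That is what the paper's proof does: it establishes by induction the invariant that $C_i'(q_2)$ equals the lub of $\Gamma_{G,m,\sigma_1\sigma\sigma_2}(C_0(q_0))$ over all decompositions $q_0\xrightarrow{\sigma_1}q_1\xrightarrow{\sigma}q_1\xrightarrow{\sigma_2}q_2$ with $|\sigma_1|+|\sigma_2|\le i$ (the widening contributes the middle cycle $\sigma$, the transition step lengthens the skeleton $\sigma_1\sigma_2$ by one), and stabilization at $i=|Q|-1$ then follows because a skeleton of length at least $|Q|$ repeats a state and can be re-decomposed. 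If you try to repair your ``level'' so that it accounts for propagation inside components, you will be led to measure the length of cycle-free path skeletons, at which point you have reproduced the paper's argument.
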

\begin{proof}
  Observe that $C_i\sqsubseteq C'_i\sqsubseteq \Gamma_{G,m}(C_0)$ for any $i\in\setN$. Thus, if there exists $i\in\setN$ such that $C_{i+1}=C_i$ we deduce that $C_i=\Gamma_{G,m}(C_0)$. Finally, in order to get the equality $C_{|Q|}=C_{|Q|-1}$, just observe by induction over $i$ that we have following equality for any $q_2\in Q$:
  $$
  C_i'(q_2)=
  \bigsqcup_{
    \stackrel{q_0\xrightarrow{\sigma_1}q_1\xrightarrow{\sigma}q_1\xrightarrow{\sigma_2}q_2}{|\sigma_1|+|\sigma_2|\leq i}%
  }
  \Gamma_{G,m,\sigma_1\sigma\sigma_2}(C_0(q_1))
  $$
  \qed
\end{proof}

\begin{example}\label{ex:lastb}
  Let us consider the $2$-graph $G_I$ obtained from the $2$-graph depicted in the center of Fig.~\ref{fig:ex} and restricted to the set of states $Q_I=\{1,\ldots,9\}$. Let us also consider the function $C_{I,0}\in Q_I\rightarrow\C_2$ defined by $C_{I,0}(1)=\{(0,0)\}$ and $C_{I,0}(q)=\emptyset$ for $q\in\{2,\ldots,9\}$. Computing inductively the sequence $C_0\sqsubseteq C_0'\sqsubseteq C_1\sqsubseteq C_1'\sqsubseteq \ldots$ defined in Proposition \ref{prop:unlarge} from $C_0=C_{I,0}$ shows that $C_6=C_5$\ifthenelsepaperversion{proc}{}{ (see section \ref{sec:execution} in appendix)}. Moreover, this computation provides the value of $C_I=\Gamma_{G_I,2}(C_{I,0})$ (see Table~\ref{tab:val}).
\end{example}

\begin{table}[Htbp]
  \centering
  $\begin{array}{|c|c|c|}
    \hline
    q   & C_{I,0}(q) & \Gamma_{G_I,2}(C_{I,0})(q) \\
    \hline
    \hline
    1   & \{(0,0)\} & \setR_+(1,3) \\
    2   & \emptyset & (1,1)+\setR_+(3,2)\\
    3   & \emptyset & \setR_+(3,2)\\
    4   & \emptyset & (1,0)+\setR_+(3,2) \\
    5   & \emptyset & (0,1)+\setR_+(1,3) \\
    6   & \emptyset & (2,1)+\setR_+(3,2) \\
    7   & \emptyset & (0,2)+\setR_+(1,3) \\
    8   & \emptyset & \conv{\{(1,0),(1,2)\}}+\setR_+(1,0)+\setR_+(1,3) \\
    9   & \emptyset & (0,1)+\setR_+(0,1)+\setR_+(3,2) \\
    \hline
  \end{array}$
  \caption{The values of $C_{I,0}$ and $C_I=\Gamma_{G_I,2}(C_{I,0})$.\label{tab:val}}
\end{table}

\begin{lstlisting}[emph=Cycle,emph=FixPoint]
FixPoint($G=(Q,\Sigma_r,T)$ a $m$-graph, $m\in\setN\moins\{0\}$, $C_0\in Q\rightarrow\C_m$)
   let $C\leftarrow C_0$
   while there exists $t\in T$ such that $\Gamma_{G,m,t}(C)\not\sqsubseteq C$ 
      $C\leftarrow \Gamma_{G,m}^W(C)$
      let $C\leftarrow C\sqcup\bigsqcup_{t\in T}\Gamma_{G,m,t}(C)$
   return $C$
\end{lstlisting}

\begin{corollary}\label{cor:finite}
  The algorithm \textsf{FixPoint}($G$,$m$,$C_0$) terminates by iterating the main while loop at most $|Q|-1$ times. Moreover, the algorithm returns $\Gamma_{G,m}(C_0)$.
\end{corollary}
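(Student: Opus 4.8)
The plan is to run algorithm \textsf{FixPoint} in lock-step with the sequence $C_0\sqsubseteq C_0'\sqsubseteq C_1\sqsubseteq C_1'\sqsubseteq\ldots$ of Proposition~\ref{prop:unlarge}, so that termination and correctness can be read off from that proposition. Concretely, I would prove by induction on $k$ that after the $k$-th execution of the body of the main \textbf{while} loop the variable $C$ equals $C_k$: the body first overwrites $C$ with $\Gamma_{G,m}^W(C)$, which sends $C_i$ to $C_i'$, and then with $C\sqcup\bigsqcup_{t\in T}\Gamma_{G,m,t}(C)$, which sends $C_i'$ to $C_{i+1}$; and $C=C_0$ before any iteration. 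Each step is effective on rational polyhedral valuations: $\Gamma_{G,m}^W$ is computed from the formula of Proposition~\ref{prop:unlargeW} together with the sets $\Lambda_{G_q,m}(q)$, which are rational polyhedral and computable by Corollary~\ref{cor:infinite} applied to the strongly-connected-component graphs $G_q$; the maps $\Gamma_{G,m,t}$ are rational linear; and finite joins and the inclusion tests in the loop guard are standard operations on rational polyhedra.

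Next I would tie the loop guard to the post-fixpoint condition in both directions. If the guard is false when $C=C_k$, that is $\Gamma_{G,m,t}(C_k)\sqsubseteq C_k$ for all $t\in T$, then $C_k$ is a post-fixpoint of the $\Gamma_{G,m,t}$ lying above $C_0$ (the sequence is increasing and starts at $C_0$), hence $\Gamma_{G,m}(C_0)\sqsubseteq C_k$ by the minimality defining $\Gamma_{G,m}(C_0)$; since the proof of Proposition~\ref{prop:unlarge} also establishes $C_k\sqsubseteq\Gamma_{G,m}(C_0)$, we get $C_k=\Gamma_{G,m}(C_0)$, so the returned value is correct whenever the algorithm halts. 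Conversely, if $C_{i+1}=C_i$ then the guard is already false at $C_i$: by extensivity of $\Gamma_{G,m}^W$ (the empty path from $q$ to $q$ contributes $C(q)$ to the join defining $\Gamma_{G,m}^W(C)(q)$) we have $C_i\sqsubseteq C_i'$, so for every $t$, $\Gamma_{G,m,t}(C_i)\sqsubseteq\Gamma_{G,m,t}(C_i')\sqsubseteq C_{i+1}=C_i$ by monotonicity of $\Gamma_{G,m,t}$ and the definition of $C_{i+1}$.

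Combining these, Proposition~\ref{prop:unlarge} furnishes an index $i^*<|Q|$ with $C_{i^*+1}=C_{i^*}$, so the guard is false once $C$ reaches $C_{i^*}$; hence the loop body is executed at most $i^*\le|Q|-1$ times (fewer if the guard happens to fail earlier, in which case by the first direction $C$ already equals $\Gamma_{G,m}(C_0)$ there), and on exit $C=\Gamma_{G,m}(C_0)$. The whole argument is essentially bookkeeping on top of Propositions~\ref{prop:unlargeW} and~\ref{prop:unlarge}; the only point requiring genuine care — rather than a direct appeal to an earlier result — is the equivalence between the syntactic halting test of the loop and the semantic fixpoint property, and in particular the implication ``stabilization of the Proposition~\ref{prop:unlarge} sequence implies guard false'', which rests on the extensivity of the acceleration operator $\Gamma_{G,m}^W$.
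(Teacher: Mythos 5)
Your proof is correct and follows exactly the route the paper intends: the paper gives no explicit proof of this corollary, treating it as an immediate consequence of Propositions~\ref{prop:unlargeW} and~\ref{prop:unlarge}, and your argument simply spells out the lock-step correspondence between the loop and the sequence $C_0\sqsubseteq C_0'\sqsubseteq C_1\sqsubseteq\ldots$ together with the equivalence of the loop guard and the stabilization condition. The one point you rightly flag as needing care --- extensivity of $\Gamma_{G,m}^W$ via the empty cycle, so that stabilization of the sequence forces the guard to fail --- is indeed the only non-trivial bookkeeping step, and you justify it correctly.
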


From Propositions \ref{prop:ttred1} and \ref{prop:ttred2} and corollaries \ref{cor:infinite} and \ref{cor:finite} we get:
\begin{theorem}\label{thm:main}
  The closed convex hull of sets symbolically represented by arithmetic automata are rational polyhedral and computable in exponential time.
\end{theorem}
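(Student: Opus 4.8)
The plan is to assemble the reductions of Section~\ref{sec:reduction} with the two algorithms of Sections~\ref{sec:infinite} and~\ref{sec:computation}. Fix an arithmetic automaton $A=(Q,\Sigma,T,Q_0,\F)$ representing $X\subseteq\setR^m$ in basis $r$, which we may assume to be built on an $m$-graph, and recall the identity established in Section~\ref{sec:reduction},
$$C=\cloconv{X}=\bigsqcup_{\stackrel{(q_I,q_D)\in Q_I\times Q_D}{(q_I,\star,q_D)\in T}}C_I(q_I)-C_D(q_D).$$
So it suffices to (i) compute $C_D$ and $C_I$ effectively, (ii) show each $C_D(q_D)$ and $C_I(q_I)$ is rational polyhedral, and (iii) check that $-C_D(q_D)$, the Minkowski sum with $C_I(q_I)$, and the finite join $\bigsqcup$ in $\C_m$ preserve rational polyhedrality and are computable.

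For $C_D$, Proposition~\ref{prop:ttred1} gives $C_D=\Lambda_{G_D,m}$, and Proposition~\ref{prop:cont} together with Corollary~\ref{cor:infinite} shows that \textsf{Cycle}$(G_D,m)$ terminates after at most $|T_D|^{|Q_D|}$ iterations of its main loop and returns finite sets $\Lambda_{G_D,m}^0(q_D)\subseteq\setQ^m$ with $C_D(q_D)=\conv{\Lambda_{G_D,m}^0(q_D)}$; hence each $C_D(q_D)$ is a rational polytope. For $C_I$, Proposition~\ref{prop:ttred2} gives $C_S=\Gamma_{G_S,m}(C_{S,0})$ and $C_I=\Gamma_{G_I,m}(C_{I,0})$, where $C_{S,0}$ is the rational polyhedral valuation of Section~\ref{sec:CI} and $C_{I,0}$ is obtained from $C_S$ by the rational linear image $x\mapsto\frac{x}{1-r}$ followed by a finite join. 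By Corollary~\ref{cor:finite}, \textsf{FixPoint}$(G_S,m,C_{S,0})$ and \textsf{FixPoint}$(G_I,m,C_{I,0})$ terminate after at most $|Q|-1$ iterations and return these valuations. Each iteration applies $\Gamma_{G,m}^W$ and then the rational linear maps $\Gamma_{G,m,t}$; by Proposition~\ref{prop:unlargeW}, $\Gamma_{G,m}^W(C)(q)=C(q)+\setR_+(C(q)-\Lambda_{G_q,m}(q))$, and since $\Lambda_{G_q,m}(q)$ is a rational polytope computable by \textsf{Cycle}, this is a rational polyhedral set effectively computable from $C$ whenever $C$ is rational polyhedral. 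By induction on the loop counter every intermediate valuation stays rational polyhedral, so $C_S$, $C_{I,0}$, and $C_I$ are rational polyhedral.

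To finish, $-C_D(q_D)$ is a rational polytope (image under $x\mapsto-x$), the Minkowski sum of a rational polyhedron with a rational polytope is rational polyhedral and effectively computable, and $\bigsqcup$ in $\C_m$ is $\cloconvop$ applied to a finite union of rational polyhedra, which is again rational polyhedral and computable. Hence $C$ is rational polyhedral and the whole procedure is effective. For the complexity, the dominant cost is the repeated invocation of \textsf{Cycle} (once per strongly connected component in each evaluation of $\Gamma_{G,m}^W$), which runs for at most $|T|^{|Q|}$ iterations, each manipulating polyhedra of singly-exponential description size; the outer \textsf{FixPoint} loops contribute only a polynomial factor and the final assembly is polynomial in the sizes of the computed polyhedra, so $C$ is computable in exponential time.

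The step I expect to be the main obstacle is the rational-polyhedrality invariant of steps~(ii)--(iii), and in particular the enlargement operator $\Gamma_{G,m}^W$: one must verify that $C(q)+\setR_+(C(q)-\Lambda_{G_q,m}(q))$ is a genuine closed rational polyhedron when $C(q)$ is a rational polyhedron and $\Lambda_{G_q,m}(q)$ a rational polytope---this relies on the fact that adding a finitely generated rational cone to a rational polyhedron yields a rational polyhedron, and that the topological closure introduced by $\sqcup$ only adds recession directions already present, so no irrational data is ever created---and that this, together with Minkowski sums, rational linear images and $\cloconvop$ of finitely many polyhedra, can be performed effectively within singly-exponential size bounds.
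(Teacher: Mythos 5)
Your proposal is correct and follows essentially the same route as the paper, whose own proof of Theorem~\ref{thm:main} consists precisely in assembling Propositions~\ref{prop:ttred1} and~\ref{prop:ttred2} with Corollaries~\ref{cor:infinite} and~\ref{cor:finite}; you simply spell out the closure properties of rational polyhedra (under $\cloconvop$ of finite unions, Minkowski sums, rational linear images, and the enlargement $\Gamma_{G,m}^W$) that the paper leaves implicit. The additional care you devote to these bookkeeping steps, including the complexity accounting, is consistent with the paper's intent.
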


\begin{example}
  We follow notations introduced in Examples \ref{ex:fin1}, \ref{ex:qsqi} and \ref{ex:lastb}. Observe that $C_I(8)-C_D(a)=\conv{\{(1,0),(1,2)\}}+\setR_+(1,0)+\setR_+(1,3)$ is exactly the closed convex hull of $X=\{x\in\setN^2 \mid 3x[1]>x[2]\}$.
\end{example}

\section{Conclusion}
We have proved that the closed convex hull of sets symbolically represented by arithmetic automata are rational polyhedral. Our approach is based on acceleration in convex data-flow analysis. It provides a simple algorithm for computing this set. Compare to \cite{DBLP:conf/lics/Latour04} (1) our algorithm has the same worst case exponential time complexity, (2) it is not limited to sets of the form  $\setZ^m\cap C$ where $C$ is a rational polyhedral convex set, (3) it can be applied to any set definable in $\fo{\setR,\setZ,+,\leq,X_r}$, (4) it can be easily implemented, and (5) it is not restricted to the most significant digit first decomposition. This last advantage directly comes from the class of arithmetic automata we consider. In fact, since the arithmetic automata can be non deterministic, our algorithm can be applied to least significant digit first arithmetic automata just by flipping the direction of the transitions. Finally, from a practical point of view, as the arithmetic automata representing sets in the restricted logic $\fo{\setR,\setZ,+,\leq}$ (where $X_r$ is discarded) have a very particular structure, we are confident that the exponential time complexity algorithm can be applied on automata with many states like the one presented in \cite{DBLP:conf/lics/Latour04}. The algorithm will be implemented in \textsc{TaPAS} \cite{LP-08} (The Talence Presburger Arithmetic Suite) as soon as possible.

\bibliographystyle{alpha}
\bibliography{thisbiblio}

\ifthenelsepaperversion{proc}{}{
\clearpage
\appendix
\appendix
\newcommand{\APP}[3]{\noindent\textbf{#1 #2.} \emph{#3}}

\newpage
\gdef\thesection{A}
\section{Proof of Proposition \ref{prop:minimalC}}

\APP{Proposition}{\ref{prop:minimalC}}{%
The valuation $\Lambda_{G,m}$ is the unique minimal valuation $C\in Q\rightarrow \C_m$ such that $\Lambda_{G,m,t}(C)\sqsubseteq C$ for any transition $t\in T$ and such that $C(q)\not=\emptyset$ for any state $q\in Q$ satisfying $\Theta_G(q)\not=\emptyset$.
}
\begin{proof}
  Let us first prove that $C=\cloconv{\lambda_{r,m}(\Theta_G)}$ is a valuation in $Q\rightarrow \C_m$ such that $\Lambda_{G,m,t}(C)\sqsubseteq C$ for any transition $t\in T$. We have the inclusion $\Lambda_{r,m,a}(\lambda_{r,m}(\Theta_G(q_2)))\subseteq \lambda_{r,m}(\Theta_G(q_1))$ for any transition $(q_1,a,q_2)\in T$. As $\cloconvop$ and $\Lambda_{r,m,a}$ are commutative, the valuation $C=\cloconv{\lambda_{r,m}(\Theta_G)}$ satisfies $\Lambda_{G,m,t}(C)\sqsubseteq C$ for any transition $t\in T$.

  Now, let us consider a valuation $C\in Q\rightarrow \C_m$ such that $\Lambda_{G,m,t}(C)\sqsubseteq C$ for any transition $t\in T$ and such that $C(q)\not=\emptyset$ for any state $q\in Q$ satisfying $\Theta_G(q)\not=\emptyset$. Let us prove that $\cloconv{\lambda_{r,m}(\Theta_G)}\sqsubseteq C$. As $\Lambda_{G,m,t}(C)\sqsubseteq C$ for any transition $t\in T$ an immediate induction shows that $\Lambda_{r,m,\sigma}(C(q))\sqsubseteq C(q')$ for any finite path $\pi=(q\xrightarrow{\sigma}q')$. Let us consider an infinite path $\theta=(q\xrightarrow{w}F)$. As $F$ is non empty, there exists a state $q'\in F$. Recall that $F$ is the set of states visited infinitely often by the path $\theta$. We deduce that there exists a cycle on $q'$ and in particular $\Theta_G(q')\not=\emptyset$. This condition implies $C(q')\not=\emptyset$. Thus there exists $x'\in C(q')$. Moreover, as $q'$ is visited infinitely often by $\theta$, there exists a strictly increasing sequence $0\leq i_0<i_1<\cdots$ of integers such that $q\xrightarrow{w(1)\ldots w(i_j)}q'$. This path shows that the vector $x_j=\Lambda_{r,m,w(1)\ldots w(i_j)}(x')$ is in $C(q)$. As $\lim_{j\rightarrow+\infty}x_j=\lambda_{r,m}(w)$ and $C(q)$ is closed we deduce that $\lambda_{r,m}(w)\in C(q)$. We have proved that $\lambda_{r,m}(\Theta_G)\subseteq C$. Therefore $\cloconv{\lambda_{r,m}(\Theta_G)}\sqsubseteq C$.
  \qed
\end{proof}


\newpage
\gdef\thesection{B}
\section{Proof of Lemma \ref{lem:car}}

\APP{Lemma}{\ref{lem:car}}{%
  $\lambda_{r,m}(\sigma^\omega)$ is the unique fix-point of $\Lambda_{r,m,\sigma}$ for any $\sigma\in\Sigma_r^+$. 
}
\begin{proof}
  As $\sigma\sigma^\omega$ and $\sigma^\omega$ are equal, equality (\ref{equ:rho}) page \pageref{equ:rho} shows that $\lambda_{r,m}(\sigma^\omega)$ is a fix-point of $\Lambda_{r,m,\sigma}$. Moreover as the uniform form of the linear function $\Lambda_{r,m,a}$ is equal to $\Lambda_{r,m,0}$ we deduce that the uniform form of $\Lambda_{r,m,\sigma}^m$ is equal to $\Lambda_{r,m,0}^{m|\sigma|}$. Since $\Lambda_{r,m,0}^m(x)=r^{-1}x$ we have proved that the uniform form of $\Lambda_{r,m,\sigma}^m$ is $x\rightarrow r^{-|\sigma|}x$ for any $x\in\setR^m$. Moreover, as $\lambda_{r,m}(\sigma^\omega)$ is a fix-point of $\Lambda_{r,m,\sigma}^m$ we deduce that $\Lambda_{r,m,\sigma}^m(x)=\lambda_{r,m}(\sigma^\omega)+r^{-|\sigma|}(x-\lambda_{r,m}(\sigma^\omega))$ for any $x\in\setR^m$. In particular, if $x$ is a fix-point of $\Lambda_{r,m,\sigma}$, we get $x=\lambda_{r,m}(\sigma^\omega)+r^{-|\sigma|}(x-\lambda_{r,m}(\sigma^\omega))$. As $r^{-|\sigma|}\not=1$ we obtain $x=\lambda_{r,m}(\sigma^\omega)$.
  \qed
\end{proof}

\newpage
\gdef\thesection{C}
\section{Proof of Proposition \ref{prop:cont}}

\APP{Proposition}{\ref{prop:cont}}{%
  We have $\Lambda_{G,m}=\conv{\lambda_{r,m}(\Theta_G^S)}$.
}
\begin{proof}
  From $\Theta_G^S(q)\subseteq\Theta_G(q)$ we deduce the inclusion $\conv{\lambda_{r,m}(\Theta_G^S)}\subseteq \Lambda_{G,m}$. Let us prove the other inclusion. Observe that $\Theta_G^S(q)$ is a finite set and in particular $\conv{\Theta_G^S(q)}$ is a closed convex set for any $q\in Q$. Let us consider the function $C\in Q\rightarrow \C_m$ defined by $C=\conv{\lambda_{r,m}(\Theta_G^S)}$. From Proposition \ref{prop:simple}, we deduce that $\Lambda_{G,m,t}(C)\sqsubseteq C$ for any transition $t\in T$. Note also that $\C(q)\not=\emptyset$ for any state $q\in Q$ such that $\Theta_G(q)\not=\emptyset$. By minimality of $\Lambda_{G,m}$ we get the other inclusion $\Lambda_{G,m}\sqsubseteq C$.
  \qed
\end{proof}

\newpage
\gdef\thesection{D}
\section{An Additional Example For Section \ref{sec:infinite}}

\begin{table}[htbp]
  \begin{center}
 
    $\begin{array}{|c|c|c|}
    \hline
    q 
       & \Theta_{G_1}^S(q)
       & -\Lambda_{G_1,2}^0(q)
    \\
    \hline
    1  
       & (00)^\omega,(0111)^\omega, 01(0010)^\omega, 0100(11)^\omega
       & \{(0,0),(\frac{1}{3},1))\}
    \\
    2  
       & 1(00)^\omega,(1011)^\omega,101(0010)^\omega,10100(11)^\omega
       & \{(\frac{1}{2},0),(\frac{7}{8},\frac{1}{4}))\}
    \\
    3  
       & (1110)^\omega,100(11)^\omega,1(0010)^\omega
       & \{(1,\frac{2}{3}),(\frac{1}{2},\frac{1}{3})\}
    \\
    4  
       & 0(11)^\omega, (0100)^\omega, 01011(00)^\omega, 010(1101)^\omega
       & \{(\frac{1}{2},1),\{(0,\frac{2}{3})\}
    \\
    5  
       & 11(00)^\omega,(1101)^\omega,(0010)^\omega,00(11)^\omega
       & \{(\frac{2}{3},1),(\frac{1}{3},0)\}
    \\
    6  
       & (11)^\omega, (0001)^\omega, 0(1101)^\omega, 011(00)^\omega
       & \{(1,1),(0,\frac{1}{3})\}
    \\
    7  
       & (11)^\omega, (1000)^\omega, 10(1101)^\omega, 1011(00)^\omega
       & \{(\frac{2}{3},0),(1,1)\}
    \\
    \hline
  \end{array}$
  \end{center}
  \caption{Some values computed\label{tab:ex}}
\end{table}

\begin{figure}        
\setlength{\unitlength}{10pt}
\pssetlength{\psunit}{10pt}
\pssetlength{\psxunit}{10pt}
\pssetlength{\psyunit}{10pt}
\begin{picture}(27,17)(-14,-9)%
\rput(0,0){\includegraphics{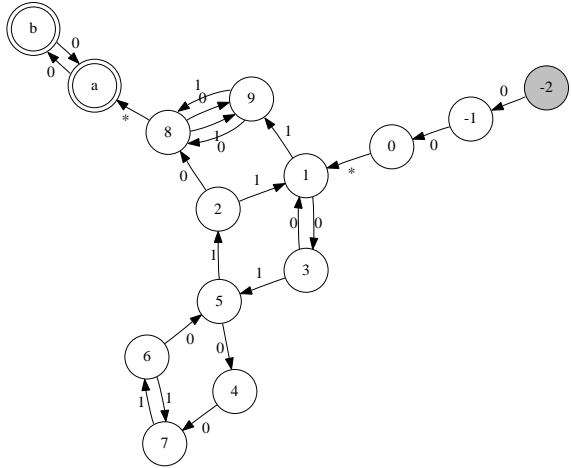}}
\end{picture}
\caption{An arithmetic automaton in basis $2$ and in dimension $2$.\label{fig:auto}}
\end{figure}

\begin{example}\label{ex:theta}
  Let us consider the $2$-graph $G$ labelled by $\Sigma_2$ and depicted in Fig.~\ref{fig:auto}. We denote by $G_1$ the graph $G$ restricted to the strongly connected component $\{1,\ldots,7\}$. By enumerating all the possible simple fry-pans $\Theta_{G_1}^S(q)$ starting from a state $q$, observe that we get the values given in the Table~\ref{tab:ex}. This table only provides the labels of the fry-pans in order to simplify the presentation. However, the fry-pans can be recovered from their labels since the graph is deterministic.
\end{example}

\newpage
\gdef\thesection{E}
\section{Proof of Corollary \ref{cor:infinite}}

\APP{Corollary}{\ref{cor:infinite}}{%
  The algorithm \textsf{Cycle}($G$,$m$) terminates by iterating the main while loop at most $|T|^{|Q|}$ times and it returns $\Lambda_{G,m}^0$.
}
\begin{proof}
  Observe that $\Theta(q)\subseteq \Theta_G^S(q)$ for any state $q$ at any step of the algorithm. Moreover, each time the while loop is executed, the set $C(q)=\conv{\lambda_{r,m}(\Theta(q))}$ strictly increases. Thus, the set $\{\theta\in \Theta_G^S(q) \mid \lambda_{r,m}(\theta)\in C(q)\}$ strictly increase each time the while loop is executed. Observe that a simple fry-pan $\theta$ is uniquelly determined from its $|Q|$ first transitions. Thus $\sum_{q\in Q}|\Theta_G^S(q)|\leq |T|^{|Q|}$. We deduce that the algorithm terminates after executing at most $|T|^{|Q|}$ times the while loop. Finally, let us prove that when the algorithm terminates it returns $\Lambda_{G,m}^0$. It is sufficient to show that $C=\Lambda_{G,m}$ when it terminates. Note that the while loop condition is no longer valid. Thus $\Lambda_{G,m,t}(C)\sqsubseteq C$ for any transition $t\in T$. As $C(q)\not=\emptyset$ for any state $q\in Q$ such that $\Theta_G(q)\not=\emptyset$, by minimality of $\Lambda_{G,m}$ we deduce that $\Lambda_{G,m}\sqsubseteq C$. Thus $C=\Lambda_{G,m}$ when the algorithm terminates. We deduce that the algorithm returns $\Lambda_{G,m}^0$.
  \qed
\end{proof}

\newpage
\gdef\thesection{F}
\section{Proof of Proposition \ref{prop:unlargeW}}

We first prove the following two technical lemmas.
\begin{lemma}\label{lem:tech}
  For any $\sigma\in(\Sigma_r^m)^+$ and for any $x\in\setR^m$ we have:
  $$
  \cloconv{\{\Gamma_{r,m,\sigma^i}(x) \mid i\in\setN\}}=x+\setR_+(x-\lambda_{r,m}(\sigma^\omega))
$$
\end{lemma}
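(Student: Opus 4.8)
The plan is to reduce everything to an explicit affine formula for $\Gamma_{r,m,\sigma}$ and then compute the convex hull by hand. First I would recall that since $\sigma\in(\Sigma_r^m)^+$ the length $|\sigma|$ is a positive multiple of $m$, and — exactly as computed in the proof of Proposition~\ref{prop:simple} — we have $\Lambda_{r,m,\sigma}(x)=(1-r^{-|\sigma|/m})\,\lambda_{r,m}(\sigma^\omega)+r^{-|\sigma|/m}x$ for every $x\in\setR^m$. Writing $\rho=r^{|\sigma|/m}$ and $p=\lambda_{r,m}(\sigma^\omega)$, note that $\rho\geq 2$ precisely because $|\sigma|/m\geq 1$ and $r\geq 2$ (this is the only place the hypothesis $\sigma\in(\Sigma_r^m)^+$, rather than $\sigma\in\Sigma_r^+$, is really used). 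Since $\Gamma_{r,m,\sigma}$ is by definition the inverse of $\Lambda_{r,m,\sigma}$, inverting this affine map gives $\Gamma_{r,m,\sigma}(y)=p+\rho\,(y-p)$.

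Next I would iterate. From the composition rule $\Gamma_{r,m,\sigma_1\sigma_2}=\Gamma_{r,m,\sigma_2}\circ\Gamma_{r,m,\sigma_1}$ one gets $\Gamma_{r,m,\sigma^i}=\Gamma_{r,m,\sigma}\circ\cdots\circ\Gamma_{r,m,\sigma}$ ($i$ times), so a trivial induction yields $\Gamma_{r,m,\sigma^i}(x)=p+\rho^i(x-p)=x+(\rho^i-1)(x-p)$ for all $i\in\setN$. In particular every point of $\{\Gamma_{r,m,\sigma^i}(x)\mid i\in\setN\}$ lies on the ray $x+\setR_+(x-p)$, which is a closed convex set; this already gives the inclusion $\cloconv{\{\Gamma_{r,m,\sigma^i}(x)\mid i\in\setN\}}\subseteq x+\setR_+(x-\lambda_{r,m}(\sigma^\omega))$.

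For the reverse inclusion I would observe that the set of scalars $\{\rho^i-1\mid i\in\setN\}$ contains $0$ (take $i=0$) and is unbounded above (as $\rho>1$), hence its convex hull in $\setR_+$ is all of $\setR_+$: for any $t\geq 0$, picking $i$ with $\rho^i-1\geq t$ makes $t$ a convex combination of $0$ and $\rho^i-1$. Consequently $x+t(x-p)$ is the corresponding convex combination of $x=\Gamma_{r,m,\sigma^0}(x)$ and $\Gamma_{r,m,\sigma^i}(x)$, so the whole ray is contained in $\conv{\{\Gamma_{r,m,\sigma^i}(x)\mid i\in\setN\}}$, a fortiori in the closed convex hull. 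Combining the two inclusions gives the claimed equality. I do not expect a real obstacle here: the only delicate points are getting the inversion of $\Lambda_{r,m,\sigma}$ right and noticing that no genuine topological closure is needed — a ray (or a single point, when $x=\lambda_{r,m}(\sigma^\omega)$) is already closed, so $\cloconvop$ collapses to $\conv$, which is itself forced to be the entire ray.
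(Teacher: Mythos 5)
Your proof is correct and follows essentially the same route as the paper: both reduce to the explicit affine formula $\Gamma_{r,m,\sigma^i}(x)=x+(r^{i|\sigma|/m}-1)(x-\lambda_{r,m}(\sigma^\omega))$ and then observe that the closed convex hull of the scalars $r^{i|\sigma|/m}-1$ is all of $\setR_+$. The only (cosmetic) difference is that you obtain the formula by inverting the affine expression for $\Lambda_{r,m,\sigma}$ from Proposition~\ref{prop:simple}, whereas the paper derives it directly from the fix-point property of $\lambda_{r,m}(\sigma^\omega)$ together with the uniform form of $\Gamma_{r,m,\sigma^i}$.
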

\begin{proof}
  As $\lambda_{r,m}(\sigma^\omega)$ is a fix-point of the linear function $\Gamma_{r,m,\sigma^i}$ and as the uniform form of the linear function $\Gamma_{r,m,\sigma^i}$ is $\Gamma_{r,m,0}^{i|\sigma|}$, we deduce that $\Gamma_{r,m,\sigma^i}(x)=x+(r^{i\frac{|\sigma|}{m}}-1)(x-\lambda_{r,m}(\sigma^\omega))$ for any $i\in\setN$. As $\cloconv{\{r^{i\frac{|\sigma|}{m}}-1 \mid i\in\setN\}}=\setR_+$ we deduce the lemma.
  \qed
\end{proof}

\begin{lemma}\label{lem:xixi}
  For any strongly connected $m$-graph $G=(Q,\Sigma_r,T)$ and for any state $q\in Q$, we have :
  $$\Lambda_{G,m}(q)=\cloconv{\{\lambda_{r,m}(\sigma^\omega) \mid q\xrightarrow{\sigma\in(\Sigma_r^m)^+}q\}}$$
\end{lemma}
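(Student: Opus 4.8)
The plan is to prove the two inclusions separately, the only substantial ingredient being the continuity of $\lambda_{r,m}$ with respect to the product topology on $\Sigma_r^\omega$. Throughout I use that $\Lambda_{G,m}(q)=\cloconv{\lambda_{r,m}(\Theta_G(q))}$ by definition.

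For the inclusion from right to left, note that whenever $q\xrightarrow{\sigma}q$ with $\sigma\in(\Sigma_r^m)^+$, the word $\sigma^\omega$ labels an infinite path in $\Theta_G(q)$, so $\lambda_{r,m}(\sigma^\omega)\in\lambda_{r,m}(\Theta_G(q))\subseteq\Lambda_{G,m}(q)$. Since $\Lambda_{G,m}(q)$ is closed and convex, it contains the closed convex hull of the set of all such vectors.

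For the reverse inclusion it suffices to prove the set inclusion $\lambda_{r,m}(\Theta_G(q))\subseteq\clo{\{\lambda_{r,m}(\sigma^\omega)\mid q\xrightarrow{\sigma\in(\Sigma_r^m)^+}q\}}$; applying $\cloconvop$ and using that the right-hand set is already closed and convex (together with $\clo{\conv{X}}=\conv{\clo{X}}$) then gives $\Lambda_{G,m}(q)=\cloconv{\lambda_{r,m}(\Theta_G(q))}\sqsubseteq\cloconv{\{\lambda_{r,m}(\sigma^\omega)\mid\ldots\}}$. So fix an infinite path $\theta$ starting from $q$ with label $w$. For each $N\in\setN$ let $\pi_N$ be the prefix of $\theta$ of length $N$, terminating in some state $q_N$; by strong connectivity there is a finite path $\rho_N$ from $q_N$ back to $q$, and $\pi_N\rho_N$ is a cycle on $q$, so its label $\sigma_N$ lies in $(\Sigma_r^m)^+$ (it is a cycle of an $m$-graph). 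By construction the first $N$ letters of $\sigma_N^\omega$ equal $w(1)\cdots w(N)$, hence $\sigma_N^\omega\to w$ in the product topology as $N\to\infty$. Finally, the defining formula $-\lambda_{r,m}(u)[i]=\sum_{j\geq0}u(mj+i)/r^{j+1}$ shows that $\lambda_{r,m}$ is continuous for this topology: two infinite words that agree on their first $N$ letters have values of $\lambda_{r,m}$ differing by at most a constant times $r^{-N/m}$. Therefore $\lambda_{r,m}(\sigma_N^\omega)\to\lambda_{r,m}(w)=\lambda_{r,m}(\theta)$, placing $\lambda_{r,m}(\theta)$ in the desired closure.

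I expect the only delicate point to be making this topological convergence precise — in particular that appending the return path $\rho_N$ does not perturb the first $N$ digits of the periodic word $\sigma_N^\omega$, and that $\lambda_{r,m}$ is genuinely continuous — but both follow at once from the explicit geometric series defining $\lambda_{r,m}$ (and Lemma \ref{lem:car} is convenient, since it identifies $\lambda_{r,m}(\sigma_N^\omega)$ as the unique fix-point of $\Lambda_{r,m,\sigma_N}$). Strong connectivity is used exactly once, to supply the return paths $\rho_N$; in the degenerate case $\Theta_G(q)=\emptyset$ both sides are empty and there is nothing to prove.
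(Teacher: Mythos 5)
Your proof is correct and follows essentially the same route as the paper's: cycles give the easy inclusion, and for the converse both arguments complete longer and longer prefixes of an infinite path into cycles on $q$ via strong connectivity, then pass to the limit using continuity of $\lambda_{r,m}$ and closedness of the right-hand side. The only cosmetic difference is that the paper fixes a single return path from a state visited infinitely often whereas you let the return path vary with $N$; both work for the same reason.
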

\begin{proof}
  Let $C(q)=\cloconv{\{\lambda_{r,m}(\sigma^\omega) \mid q\xrightarrow{\sigma\in(\Sigma_r^m)^+}q\}}$ be defined for any $q\in Q$. Note that for any cycle $\pi=(q\xrightarrow{\sigma\in(\Sigma_r^m)^+}q)$ we have $\pi^\omega\in \Theta_{G_q}(q)$. In particular $\lambda_{r,m}(\sigma^\omega)\in \Lambda_{G,m}(q)$. We deduce the inclusion $C(q)\sqsubseteq \Lambda_{G,m}(q)$. For the other inclusion, let us consider an infinite path $q\xrightarrow{w}F$ and let $q'\in F$. Since $F$ is the set of states visited infinitely often, there exists a strictly increasing sequence of integers $0< i_0<i_1<\cdots$ such that $q\xrightarrow{w(1)\ldots w(i_j)}q'$ for any integer $j\geq 0$. As $G$ is strongly connected, there exists a path $q'\xrightarrow{\sigma}q$. The cycle $q\xrightarrow{w(1)\ldots w(i_j)\sigma}q$ shows that the vector $x_j=\lambda_{r,m}((w(1)\ldots w(i_j)\sigma)^\omega)$ is in $C(q)$. As $\lim_{j\rightarrow+\infty}x_j=\lambda_{r,m}(w)$ and $C(q)$ is closed we have proved that $\lambda_{r,m}(w)\in C(q)$. Thus $\Lambda_{G,m}(q)\sqsubseteq C(q)$.
  \qed
\end{proof}

\APP{Proposition}{\ref{prop:unlargeW}}{%
  For any $C\in Q\rightarrow\C_m$, and for any $q\in Q$, we have :
  $$
  \Gamma_{G,m}^W(C)(q)=C(q)+\setR_+(C(q)-\Lambda_{G_q,m}(q))
  $$
}
\begin{proof}
  Note that if there does not exist a $q\xrightarrow{\sigma}q$ then $\Lambda_{G_q,m}(q)=\emptyset$ and the previous equality is immediate. Otherwise, from Lemmas \ref{lem:tech} and \ref{lem:xixi} we get the following equalities:
  \begin{align*}
    \Gamma_{G,m}^W(C)(q)
    &=\bigsqcup_{q\xrightarrow{\sigma}q}\Gamma_{r,m,\sigma}(C(q))\\
    &=\bigsqcup_{x\in C(q)}\bigsqcup_{q\xrightarrow{\sigma\in(\Sigma_r^m)^+}q}\cloconv{\{\Gamma_{r,m,\sigma^i}(x) \mid i\in\setN\}}\\
    &=\bigsqcup_{x\in C(q)}\bigsqcup_{q\xrightarrow{\sigma\in(\Sigma_r^m)^+}q}x+\setR_+(x-\lambda_{r,m}(\sigma^\omega))\\
    &=\bigsqcup_{x\in C(q)}x+\setR_+(x-\Lambda_{G_q,m}(q))\\
  \end{align*}
  In particular we deduce that $\Gamma_{G,m}^W(C)(q)\sqsubseteq C(q)+\setR_+(C(q)-\Lambda_{G_q.m}(q))$. Conversely, let us consider $x\in C(q)+\setR_+(C(q)-\Lambda_{G_q,m}(q))$. The vector $x$ can be decomposed into $x=c_1+h(c_2-z)$ where $c_1,c_2\in C(q)$, $z\in \Lambda_{G_q,m}(q)$ and $h\in\setR_+$. Let us denote by $c=\frac{1}{1+h}(c_1+hc_2)$. As $C(q)$ is convex we deduce that $c\in C(q)$. From $x=c+h(c-z)$ we deduce that $x\in \Gamma_{G,m}^W(C)(q)$.
  \qed
\end{proof}

\newpage
\gdef\thesection{G}
\section{An Execution of Algorithm \textsf{FixPoint}}\label{sec:execution}
\begin{sideways}
  \centering
  $\begin{array}{|c||c|c|c|c|c|c|c|c|c|}
    \hline
    & 1 & 2 & 3 & 4 & 5 & 6 & 7 & 8 & 9\\
    \hline
    \hline
    C_0 
        & \{(0,0)\} 
        & \emptyset 
        & \emptyset 
        & \emptyset 
        & \emptyset 
        & \emptyset 
        & \emptyset 
        & \emptyset 
        & \emptyset 
    \\
    \hline
    C_0' 
        & L 
        & \emptyset 
        & \emptyset 
        & \emptyset 
        & \emptyset 
        & \emptyset 
        & \emptyset 
        & \emptyset 
        & \emptyset 
    \\
    \hline
    C_1 
        & L 
        & \emptyset 
        & L' 
        & \emptyset 
        & \emptyset 
        & \emptyset 
        & \emptyset 
        & \emptyset 
        &  (0,1)+\setR_+(0,1)
    \\
    \hline
    C_1' 
        & L  
        & \emptyset 
        & L' 
        & \emptyset 
        & \emptyset 
        & \emptyset 
        & \emptyset 
        & \emptyset 
        &  (0,1)+D'
    \\
    \hline
    C_2 
        & L  
        & \emptyset 
        & L' 
        & \emptyset 
        & (0,1)+L 
        & \emptyset 
        & \emptyset 
        &  (1,1)+D
        &  (0,1)+D'
    \\
    \hline
    C_2' 
        & L 
        & \emptyset 
        & L' 
        & \emptyset 
        & (0,1)+L 
        & \emptyset 
        & \emptyset 
        & (1,1)+D 
        & (0,1)+D'
    \\
    \hline
    C_3 
        & L 
        & (1,1)+L' 
        & L' 
        & (1,0)+L' 
        & (0,1)+L 
        & \emptyset 
        & \emptyset 
        & \conv{\{(1,0),(1,1)\}}+D 
        & (0,1)+D' 
    \\
    \hline
    C_3' 
        & L 
        & (1,1)+L' 
        & L' 
        & (1,0)+L'  
        & (0,1)+L 
        & \emptyset 
        & \emptyset 
        & \conv{\{(1,0),(1,1)\}}+D 
        & (0,1)+D' 

    \\
    \hline
    C_4 
        &L  
        & (1,1)+L' 
        & L' 
        & (1,0)+L' 
        & (0,1)+L 
        & \emptyset 
        & (0,2)+L 
        & \conv{\{(1,0),(1,2)\}}+D 
        & (0,1)+D' 
    \\
    \hline
    C_4' 
        &L  
        & (1,1)+L' 
        & L' 
        & (1,0)+L' 
        & (0,1)+L 
        & \emptyset 
        & (0,2)+L 
        & \conv{\{(1,0),(1,2)\}}+D 
        & (0,1)+D' 
    \\
    \hline
    C_5 
        & L  
        & (1,1)+L' 
        & L' 
        & (1,0)+L' 
        & (0,1)+L 
        & (2,1)+L' 
        & (0,2)+L 
        & \conv{\{(1,0),(1,2)\}}+D 
        & (0,1)+D' 
    \\
    \hline
  \end{array}$
\end{sideways}\\
  Where $L=\setR_+(1,3)$, $L'=\setR_+(3,2)$, $D=\setR_+(1,0)+L$ and $D'=\setR_+(0,1)+L'$.

} 

\end{document}